\newcommand{\bA}{\textbf{A}}
\newcommand{\ba}{\textbf{a}}
\newcommand{\bL}{\textbf{L}}
\newcommand{\bP}{\textbf{P}}
\newcommand{\bt}{\textbf{t}}
\newtheorem{theorem}{Theorem}
\theoremstyle{remark}
\def\BibTeX{{\rm B\kern-.05em{\sc i\kern-.025em b}\kern-.08em
    T\kern-.1667em\lower.7ex\hbox{E}\kern-.125emX}}
\begin{document}
%\def\baselinestretch{.9}
%\setlength{\belowdisplayskip}{1pt}
%\setlength{\belowdisplayshortskip}{1pt}
%\setlength{\abovedisplayskip}{1pt}
%\setlength{\abovedisplayshortskip}{1pt}

% Title
\label{title}
%\title{Satellite-Terrestrial Coexistence in Ka Band}
\title{Near-Optimal LOS and Orientation Aware\\ Intelligent Reflecting Surface Placement}

\author{%Ehsan Tohidi, Sven Haesloop, Lars Thiele, and Slawomir Stanczak%
	\IEEEauthorblockN{Ehsan Tohidi\IEEEauthorrefmark{1}\IEEEauthorrefmark{2}, Sven Haesloop\IEEEauthorrefmark{1}, Lars Thiele\IEEEauthorrefmark{1}, and Slawomir Sta\'nczak\IEEEauthorrefmark{1}\IEEEauthorrefmark{2}
		}
	\IEEEauthorblockA{\IEEEauthorrefmark{1}%
		Fraunhofer HHI, Berlin, 
\IEEEauthorrefmark{2} Technische Universität Berlin\\
Email: \{ehsan.tohidi, sven.haesloop, lars.thiele, slawomir.stanczak\}@hhi.fraunhofer.de}\,\\
\thanks{The authors acknowledge the financial support by the Federal Ministry of Education and Research of Germany in the programme of ``Souver\"an. Digital. Vernetzt.'' Joint project 6G-RIC, project identification number: 16KISK020K and 16KISK030.}
}

\maketitle

\begin{abstract}
Due to their passive nature and thus low energy consumption, intelligent reflecting surfaces (IRSs) have shown promise as means of extending coverage as a proxy for connection reliability. 
The relative locations of the base station (BS), IRS, and user equipment (UE) determine the extent of the coverage that IRS provides which demonstrates the importance of IRS placement problem. More specifically, locations, which determine whether BS-IRS and IRS-UE line of sight (LOS) links exist, and surface orientation, which determines whether the BS and UE are within the field of view (FoV) of the surface, play crucial roles in the quality of provided coverage.
Moreover, another challenge is high computational complexity, since IRS placement problem is a combinatorial optimization, and is NP-hard. Identifying the orientation of the surface and LOS channel as two crucial factors, we propose an efficient IRS placement algorithm that takes these two characteristics into account in order to maximize the network coverage. We prove the submodularity of the objective function which establishes near-optimal performance bounds for the algorithm. Simulation results demonstrate the performance of the proposed algorithm in a real environment.
\end{abstract}

\begin{IEEEkeywords}
Intelligent reflecting surface, placement, orientation, submodular optimization.
\end{IEEEkeywords}

\section{Introduction} 
Research on the sixth generation (6G) of wireless networks is being intensively pursued in order to find solutions to enable future applications, on the one hand, with high quality of service (QoS) requirements \cite{wu2019intelligent, pan2021reconfigurable, wu2021intelligent}, while on the other hand, reducing the network energy consumption and hardware costs. Intelligent reflecting surfaces (IRSs) are an emerging technology for future 6G wireless communication networks, which can make a decisive contribution to fulfilling these requirements \cite{wu2019intelligent, pan2021reconfigurable, wu2021intelligent}.
In contrast to densifying the network with more base stations (BS) or using other active nodes such as decode-and-forward (DF) or amplify-and-forward (AF) relays that process and amplify the signal, IRSs are passive in nature, i.e., they reflect the signal without amplification or any other complex operations such as encoding and decoding.
Thus, IRS has the potential for reducing hardware and energy costs compared to their active counterparts \cite{yang2021energy}, which enables a denser deployment in wireless networks.

Developing IRS-assisted wireless communication networks raises the problem of IRS placement, a part of the so-called network planning problem. In order to realize IRS placement, we need a model that captures the main characteristics of IRS, while still being mathematically tractable given the large scale of network planning problems. The question of ``what level of abstraction'' is then raised in order to maintain the right balance between accuracy and complexity.

Migration to higher frequency bands such as millimeter wave (mmWave) or the sub-terahertz (SubTHz) to achieve the ambitious requirements in 6G wireless networks on the one hand, and the product-distance path loss model of IRS on the other hand, imply the need for large surfaces with many elements, which can achieve passive beamforming gain through joint reflection \cite{wu2019intelligent}. To compensate for the severe path loss, besides the required large surface, the existence of the LOS path is indispensable, which is equivalent to the LOS link between BS-IRS and IRS-UE. In addition to improving the channel gain, the LOS link also simplifies the control and operation of an IRS, which is a typical assumption in the literature
\cite{9690635,9838993,9827797}. This issue, moreover, explains a prospective use case of IRSs, namely coverage extension, by providing virtual LOS links through the IRSs.
Furthermore, another important feature that to the best knowledge of the authors is largely neglected in the literature is the orientation of the surface. More specifically, the orientation of the surface with respect to the BS and UE. It is known that the further we move away from the surface broadside (i.e., toward the end fire), the higher the effects of impairments such as beam squint and gain loss become. Therefore, in a realistic model, we need to limit the field of view (FoV) of the surface, which is determined, for instance, based on the power budget and ensures that the BS and considered area of interest (i.e., technically where UEs are supposed to be) are within this FoV.

\subsection{Related Works}
The IRS placement problem has recently become of high interest, e.g., \cite{wu2021intelligent,ghatak2021placement,ntontin2020reconfigurable,lu2021aerial,hashida2020intelligent,9712623,issa2021using,9903366,zeng2020reconfigurable}.
Considering a fixed scenario, it is shown in \cite{wu2021intelligent} that the minimum product-distance path loss model is achieved by placing the IRS either close to the BS or close to the UE. However, the existence of the LOS path is not considered, and particularly in this case, the probability that both the direct link between BS and UE and the IRS-aided link being blocked simultaneously will increase.

The authors in \cite{ntontin2020reconfigurable} propose an IRS placement approach for highly-directional mmWave links considering transmission beam footprint at IRS plane as well as IRS size to maximize end-to-end signal-to-noise ratio (SNR). 
A joint placement and passive beamforming of aerial intelligent reconfigurable surfaces (AIRS) attached to a flying object such as an unmanned aerial vehicle (UAV) is proposed in \cite{lu2021aerial}. Moreover, in \cite{hashida2020intelligent}, an IRS placement optimization problem aiming at mitigation of inter-cell interference in air-ground communication networks is formulated, where a terrestrial IRS reflects the radio signal to an airborne receiver. Furthermore, in \cite{9712623}, an adaptive differential evolution algorithm is developed to optimize the number, locations, and phase shift coefficients of IRSs.
In \cite{issa2021using}, a method for the placement of multiple IRSs in indoor scenarios is developed that maximizes the achievable rate in all rooms. For this purpose, an indoor channel model is used, which takes obstacles into consideration through corresponding values of the Rician K-factors.
The authors of \cite{ghatak2021placement} incorporate objects to model blockage for optimal placement of two RISs in mmWave scenarios for received signal strength indicator (RSSI)-based localization of users.

\cite{9903366} is among the rare works that besides the location, investigates the effect of the orientation of an IRS on extended coverage. The authors confirm that the closer the angle of a user to the broadside of the surface is, the higher the achieved gain becomes. However, only one IRS and one single user is considered, only orientation towards the user is taken into account, and no algorithm for orientation and location optimization is proposed. As another IRS placement work considering orientation, authors in \cite{zeng2020reconfigurable} proposed an algorithm to obtain the optimal placement and orientation of an IRS. Since, in \cite{zeng2020reconfigurable}, only one IRS is assumed, and also, a free space is considered for the environment (i.e., no obstacles), the optimal orientation is always reported as the one with a broadside towards the BS.

\subsection{Contributions and Novelties}
This paper takes one step further in the direction of IRS placement as a component of IRS-assisted wireless communication networks with a  focus on coverage extension by establishing virtual LOS links between BSs and UEs. Taking into account the existing tradeoff between accuracy and complexity, we propose an improved model for IRS placement considering both the LOS path and orientation of the surface in order to tackle the IRS placement problem. 
The proposed model accepts a realistic environment in order to include obstacles and determine potential areas for IRS placement. Since IRS placement is a combinatorial discrete optimization problem and is NP-hard \cite{9712623}, we propose an efficient algorithm with near-optimal performance guarantees based on submodular optimization to tackle the placement problem, which is done for the first time to the best of the authors' knowledge. The contributions of this paper can be summarized as
follows:

\begin{itemize}
    \item We consider a more realistic environment including obstacles to determine the availability of LOS links.
    \item We enhance the system model by considering placement of multiple IRSs in addition to including two main factors, namely LOS and orientation, in the problem formulation
    \item We prove the submodularity of the optimization problem and propose an efficient placement algorithm with near-optimal performance guarantees.
\end{itemize}

\subsection{Outline and Notations}
The rest of the paper is organized as follows. In Section II, the problem is stated. Section III provides the algorithm and near-optimal guarantees. Simulation results are reported in
Section IV. Finally, Section V concludes the paper. We adopt the notation of using boldface lower (upper) case for vectors $\ba$ (matrices $\bA$).
$\mathbb{R}^{N\times M}$ is the set of $N \times M$ real matrices. Moreover, $|\mathcal{A}|$ is the cardinality of the set $\mathcal{A}$. Finally, $O(\cdot)$ is a mathematical notation that describes the order of a function or in other words, the computational complexity of a function.
\section{Problem Statement}
%\begin{itemize}
%    \item system model
%    \begin{itemize}
%        \item a given BS location and environment
%        \item gridding the region of interest
%        \item candidate location of IRSs
%        \item orientation/los calculations
%    \end{itemize}
%    \item problem statement
%    \begin{itemize}
%        \item maximization of los coverage
%        \item subject to k IRS placement
%        \item other given constraints such as candidate locations, valid orientations
%    \end{itemize}
%\end{itemize}

We consider a multi-IRS-assisted wireless communication system in a scenario as shown in Fig. \ref{fig.SystemModel}, which comprises a BS located at $\bP_B=[B_x,B_y,B_z]$, an IRS deployment area $\mathcal{P}=\{\bP_1,\ldots,\bP_M\}$ with $M$ IRS candidate locations, and an area of interest $\mathcal{L}=\{\bL_1,\ldots,\bL_N\}$ with $N$ potential locations for UEs, to be covered either by the BS or an IRS. 
%we grid the region of interest and deployment of IRSs into
%discrete sets of $N$ and $M$ points, respectively, i.e., $\mathcal{L}=\{\bL_1,\ldots,\bL_N\}$ and $\mathcal{P}=\{\bP_1,\ldots,\bP_M\}$
%As mentioned earlier, 
%in higher frequencies (e.g., 28GHz or 140GHz), 
%LOS link is of utmost importance in order to provide an acceptable quality of service which in other words coverage translates to LOS coverage. 
%Subsequently, 
We assume a LOS link to the BS location, $\bP_B$, for all $\bP\in\mathcal{P}$\footnote{If LOS to BS is not considered in $\mathcal{P}$, points with no LOS can be removed as a preprocessing step.}. %To simplify the illustration of the problem, we assume the UEs are all at the same height denoted by $h_{\text{UE}}$ and the IRSs are also supposed to be deployed at a fixed height $h_{\text{IRS}}$.
%In order to consider the constraint on the orientation of the surface,
Moreover, regarding the orientation of the surface,
%Besides existence of LOS link, orientation of the surface is another factor to consider for the IRS placement problem. First of all, 
we define the valid FoV of the surface as $[-\theta_{\text{max}},+\theta_{\text{max}}]$ with $\theta_{\text{max}}$ being the maximum acceptable deviation from the broadside %as the acceptable FoV %range of steered azimuth angles 
%of the surface 
(for both incident and reflecting angles), that satisfies the given system constraints (e.g., requirements on the maximum beam squint effect and gain loss) \footnote{Here, we only limit FoV on azimuth axis %Elevation angle constraint is not included here 
due to two main reasons: $(i)$ steered elevation angle is usually less of problem with the typical covering distances and installation heights, and $(ii)$ simplifying the notation of the problem. However, including elevation FoV constraint is straightforward, i.e., by adding a new dimension (elevation angle) to the feasible space of IRS placement.}. For each candidate location of the IRS, the range of valid orientations of the surface is determined in a way to ensure the steering angle towards the BS is in the range of valid FoVs. Fig. \ref{fig.ValidRange} illustrates an example of a valid range of azimuth angles for an arbitrary candidate position of IRS and the given BS location. The blue surface (line in the figure) is the one with broadside in the direction of the BS, and the red and green ones are the ones that reach $\pm\theta_{\text{max}}$ in steering angle toward the BS. Therefore, the valid range of orientations of the surface that provide a steering angle within the valid FoV are all orientations between the orientation of red and green surfaces. Similarly, for each candidate location of the IRS and for a given UE position, the range of valid orientations of the surface is calculated. This procedure should be repeated for the whole UE positions that are supposed to be covered by the considered IRS location (i.e., a subset of $\mathcal{L}$). As a result, for each IRS candidate location, given the BS location, and the set of UE positions that are supposed to be covered by the considered IRS, the final valid orientations of the surface are calculated by the intersection of all individually determined valid orientations of the surface, i.e., given by BS and the subset UE positions.

%Taking into account the given maximum deviation from broadside, range of possible orientations of the surface that  determined based on the its relative position to the BS and to the potential UE positions, e.g., a subset of $\mathcal{L}$, that is supposed to cover. In other words, for each candidate location of the IRS, the orientation of the surface should be selected from the intersection of the valid azimuth range obtained from BS and all valid azimuth ranges obtained from points in the region of interest that are supposed to be covered by this IRS. Fig. \ref{fig.ValidRange} shows an example of valid range of azimuth angles for an arbitrary candidate position of IRS and the given BS location. The blue surface (line in the figure) is the one with broadside in the direction of the BS, and the red and green ones are the ones that reach the $\pm\theta_{\text{max}}$ in steering angle toward the BS.

\begin{figure}[h]
     \centering
     \subfigure[]{\includegraphics[width=0.23\textwidth]{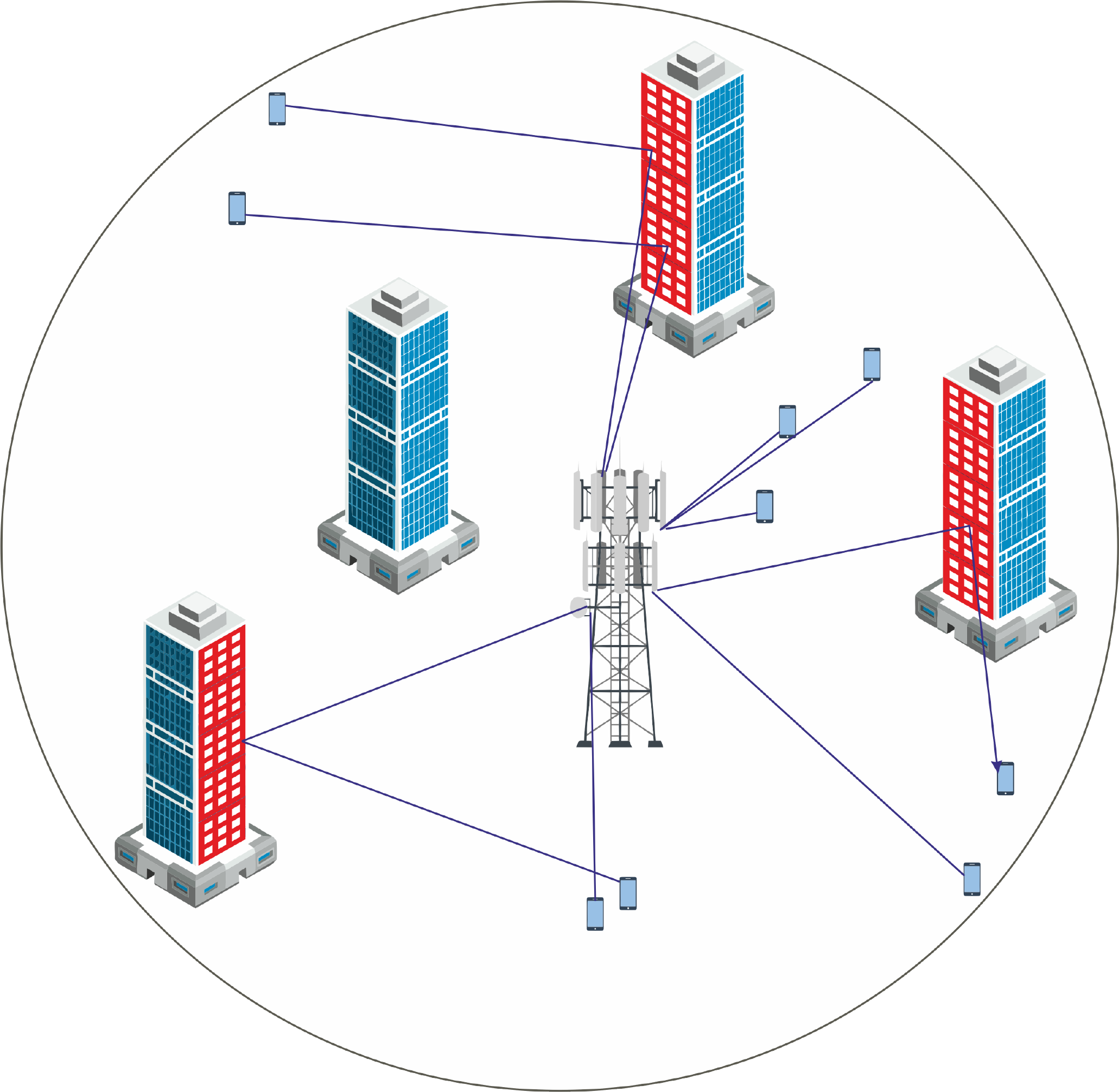}
     \label{fig.SystemModel}
     }
     \subfigure[]{\includegraphics[width=0.23\textwidth]{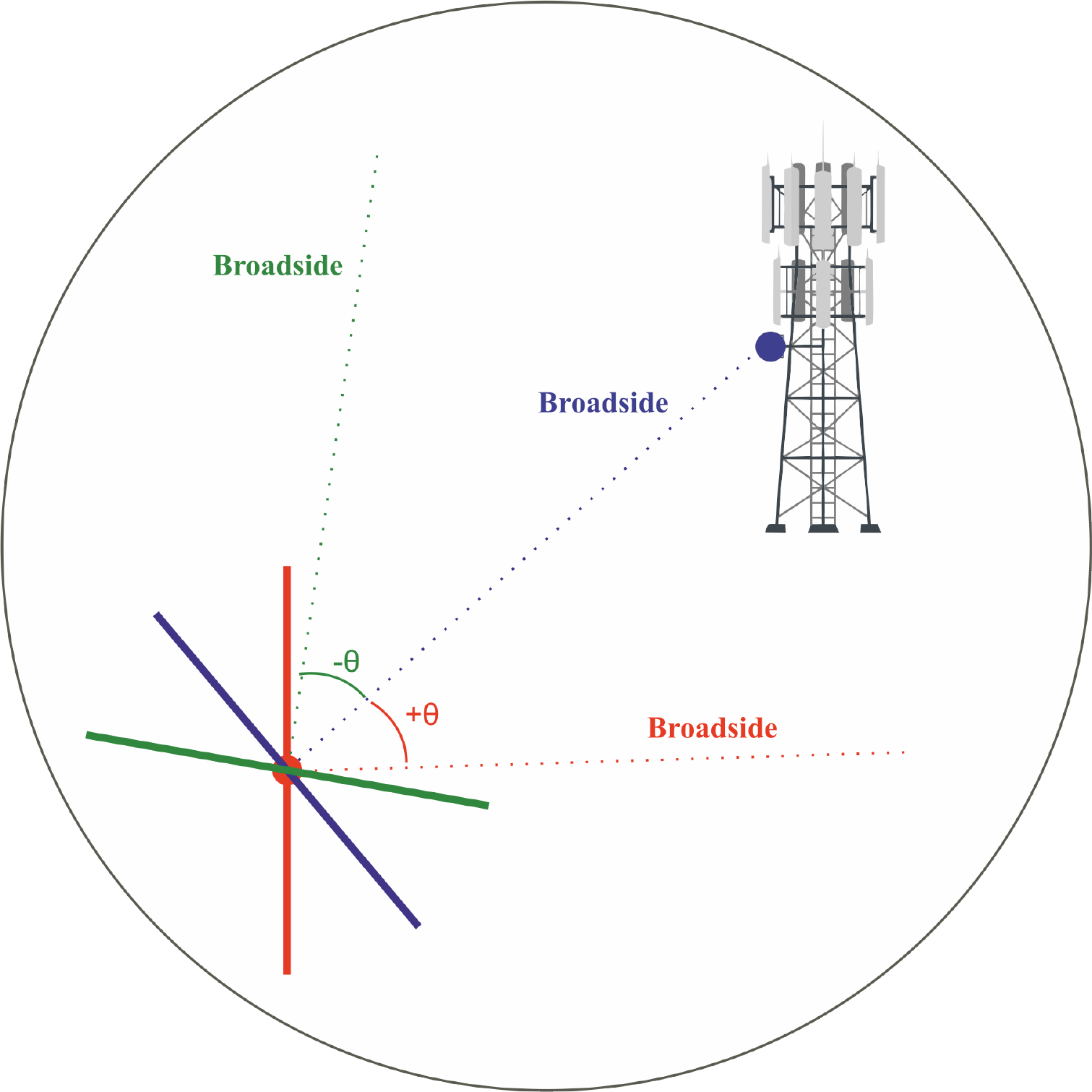}
     \label{fig.ValidRange}}
        \caption{(a) IRS-assisted system, (b) demonstration of valid azimuth orientations of IRS with respect to the BS.}
\end{figure}

%\begin{figure}[h]
%\centering
%\includegraphics[width=0.2\textwidth]{}
%\caption{Valid azimuth orientations of IRS with respect to the BS.}
%\label{fig.ValidRange}
%\end{figure}

We introduce the function $f(\cdot)$ which accepts a set $\mathcal{S}\subseteq \mathcal{P}$ of positions in 3D space, as locations of deployed IRSs, with $f(\mathcal{S})$ determining a subset of points in the region of interest $f(\mathcal{S})\subseteq \mathcal{L}$ in which UEs have LOS link to at least one of the points in $\mathcal{S}$ or to the BS. In this work, on the one hand, the objective is to reduce the placement cost, i.e., reduction in the number of deployed IRSs, while on the other hand, maximizing the extended LOS coverage provided by the BS and the deployed IRSs. Considering a maximum of $K$ deployed IRSs, the optimization problem can be formulated as follows:
\begin{equation}
    \begin{aligned}
        \underset{\mathcal{S}}{\max}~~ &f(\mathcal{S})\\
        \text{s.t.}~&|\mathcal{S}|\leqslant K\\
    \end{aligned}
    \label{eq.optorig}
\end{equation}
%In order to keep the optimization problem tractable, we grid the region of interest and deployment of IRSs into
%discrete sets of $N$ and $M$ points, respectively, i.e., $\mathcal{L}=\{\bL_1,\ldots,\bL_N\}$ and $\mathcal{P}=\{\bP_1,\ldots,\bP_M\}$. 
The optimization problem in \eqref{eq.optorig} can be generalized by considering different importance/priorities for the points within the area of interest which are denoted by $q_n,~ n\in\{1,\ldots,N\},$ and different costs of deployment for different locations of the deployment area $c_m,~ m\in\{1,\ldots,M\},$ assuming a total budget of deployment $B$. In the case of uniform cost, we can assume $c_m=1$, in which case the budget $B$ reduces to the maximum number of deployed IRSs $K$. We define $g(\mathcal{S})$ as the new objective function which is a weighted sum of the area covered exclusively by IRSs, i.e., the extended coverage provided by deployed IRSs set $\mathcal{S}$ that would not be covered otherwise with the BS (in a mathematical form, $f(\mathcal{S})\setminus f(\emptyset)$).
Then,
\begin{equation}
    \begin{aligned}
        \underset{\mathcal{S}\in \mathcal{P}}{\max}~~ &g(\mathcal{S}) = \sum_{n,~ \bL_n\in f(\mathcal{S})\setminus f(\emptyset)}{q_n}\\
        \text{s.t.}~& \sum_{s\in\mathcal{S}}c_s\leqslant B
    \end{aligned}
    \label{eq.optprob}
\end{equation}

Since, \eqref{eq.optprob} is a combinatorial optimization problem and NP-hard in nature \cite{9712623}, we prove submodularity of the problem and employ greedy optimization
as a suboptimal approach to tackle this problem.

\section{IRS Submodular Placement} \label{sec.IRSPlacement}
%Submodularity proof in case of considering all the orientations/locations

%Proposed algorithm, that selects the best orientation anyway, so the earlier proof is valid

%The orientation selection algorithm that does not need to consider all possibilities

%pseudo codes
%\begin{itemize}
%    \item Combinatorial optimization problem (NP-hard)
%    \item Submodular optimization
%    \item Proof of submodularity, give the performance bound
%    \item Proposed algorithm
%    \begin{itemize}
%        \item searching through all possible locations
%        \item how to determine the best orientation
%        \item select the best candidate and repeating the procedure
%    \end{itemize}
%    \item Calculation of computational complexity
%\end{itemize}

%In this section, we propose an IRS placement algorithm with low computational complexity (i.e., a greedy algorithm), yet provide near-optimal performance bound proving the submodularity of the objective function. We begin with proving the submodularity of the problem in a more general case, then propose the placement algorithm, and then demonstrate how the theorem applies to the proposed algorithm which guarantees the performance bound.

In this section, we prove the submodularity of the coverage function (i.e, $g(\cdot)$), which enables us to use a greedy algorithm to perform IRS placement with near-optimality guarantees.
%In this section, we propose an IRS placement algorithm with low computational complexity (i.e., a greedy algorithm), yet providing near-optimal performance bound proving the submodularity of the objective function. 
The sketch of the proof is as follows. We start with proving the submodularity of the problem considering all the candidate locations combined with all feasible orientations. Then, we propose the IRS placement algorithm, in which, given the candidate set of locations, at each step, it selects the best location (in a greedy sense) with the best feasible orientation to add to the so far set of selected locations. This feature of the algorithm clarifies that even though we have only considered the set of locations (i.e., not combined with all feasible orientations), it is technically equivalent to the conditions of the provided proof, which consequently preserves the optimality of the orientation while reducing the computational complexity. In other words, the proposed algorithm provides a solution (i.e., a set of IRS locations and corresponding orientations) with near-optimal performance. Finally, some variations of the algorithm under special cases are presented.

\begin{theorem}\label{theorem.sub}
Assuming the set $\mathcal{P}$ contains all candidate locations of IRS combined with all feasible orientations, $g: 2^{\mathcal{P}} \rightarrow \mathbb{R}_+$ defined in \eqref{eq.optprob} is a normalized, monotone, submodular set function.
\end{theorem}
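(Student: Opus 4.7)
The plan is to recognize $g$ as a (shifted) weighted coverage function on the ground set $\mathcal{P}$ of location--orientation pairs, and then invoke the classical argument that weighted coverage functions are normalized, monotone, and submodular whenever the weights are nonnegative. Since $\mathcal{P}$ is assumed to contain all candidate locations combined with all feasible orientations, each element $p\in\mathcal{P}$ specifies a fixed surface pose, so the set of UE locations it can serve is uniquely determined and does not depend on which other IRSs are chosen.

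First I would introduce, for every $p\in\mathcal{P}$, the per-element coverage set $C(p)\subseteq\mathcal{L}$ consisting of all $\bL_n$ that (i) have LOS to $p$ and (ii) fall inside the FoV $[-\theta_{\max},+\theta_{\max}]$ of the surface at $p$ (with the BS side already guaranteed by the feasible-orientation assumption). Let $C_B := f(\emptyset)$ be the set of UE positions with LOS to the BS. Directly from the definition of $f(\mathcal{S})$ given before \eqref{eq.optorig} one gets the decomposition
\begin{equation*}
f(\mathcal{S}) \;=\; C_B \,\cup\, \bigcup_{p\in\mathcal{S}} C(p),\qquad f(\mathcal{S})\setminus f(\emptyset) \;=\; \bigcup_{p\in\mathcal{S}} D(p),
\end{equation*}
where $D(p):=C(p)\setminus C_B$. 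Substituting into \eqref{eq.optprob} yields the explicit weighted-coverage form
\begin{equation*}
g(\mathcal{S}) \;=\; \sum_{n:\,\bL_n\in\bigcup_{p\in\mathcal{S}} D(p)} q_n.
\end{equation*}

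From this representation the three claimed properties are immediate. Normalization: $g(\emptyset)=0$ since the index set is empty. Monotonicity: if $\mathcal{A}\subseteq\mathcal{B}$ then $\bigcup_{p\in\mathcal{A}} D(p)\subseteq\bigcup_{p\in\mathcal{B}} D(p)$, and $q_n\ge 0$ gives $g(\mathcal{A})\le g(\mathcal{B})$. Submodularity: for $\mathcal{A}\subseteq\mathcal{B}$ and $p\notin\mathcal{B}$, the marginal gain equals $\sum_{n:\bL_n\in D(p)\setminus\bigcup_{q\in\mathcal{A}} D(q)} q_n$ at $\mathcal{A}$ and the analogous sum at $\mathcal{B}$; since enlarging the union can only shrink the ``newly covered'' portion, the marginal at $\mathcal{A}$ dominates the marginal at $\mathcal{B}$, again using $q_n\ge 0$.

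The only delicate step, which I would flag as the main obstacle rather than a computational one, is justifying that $f$ actually decomposes as a union of per-element coverage sets. A priori the FoV discussion after Fig.~\ref{fig.ValidRange} makes the valid-orientation set depend on the subset of UEs assigned to an IRS, which could destroy submodularity. The assumption in the theorem -- that $\mathcal{P}$ enumerates location--orientation pairs -- is precisely what removes this coupling: with the orientation fixed per element, $C(p)$ is a static subset of $\mathcal{L}$, LOS and FoV conditions become pointwise properties, and the coverage function structure above holds verbatim. Once that modelling point is articulated, the remainder is the standard set-cover submodularity argument.
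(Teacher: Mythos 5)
Your proof is correct and follows essentially the same route as the paper's: both reduce the claim to the standard weighted-coverage argument, with normalization and monotonicity immediate and submodularity coming from the fact that an element's marginal coverage can only shrink as the selected set grows. The only difference is presentational: the paper proves the set inclusion $f(\mathcal{S}\cup\{a,b\})\setminus f(\mathcal{S}\cup\{b\})\subseteq f(\mathcal{S}\cup\{a\})\setminus f(\mathcal{S})$ by an element-chasing contradiction whose key step (deducing $x\in f(\{a\})$) silently uses the union decomposition $f(\mathcal{S})=f(\emptyset)\cup\bigcup_{p\in\mathcal{S}}C(p)$ that you write out explicitly, so your version makes the load-bearing modelling assumption (a fixed orientation per ground-set element, hence a static per-element coverage set) visible rather than implicit.
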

\begin{proof}
The proof is derived in Appendix \ref{ap.theorem1}.
\end{proof}

Based on Theorem \ref{theorem.sub}, $g(\cdot)$ with a ground set containing all candidate locations of IRS combined with all feasible orientations is a normalized, monotone, submodular function. %Therefore, by employing a greedy
%algorithm for \eqref{eq.optprob}, the solution can be derived where its near-optimality can be guaranteed \cite{}. 
In the sequel, to tackle \eqref{eq.optprob},  we first propose a near-optimal greedy algorithm for the uniform cost
problem and then proceed to the general case of IRS placement with non-uniform costs.

\subsection{Uniform Placement Cost for IRSs}
In this part, we consider uniform placement costs for IRSs, i.e., $c_m=1$, and assume the maximum number of deployed IRSs to be $K$. Leveraging the submodularity of the objective function, we propose a greedy placement algorithm for IRSs. The algorithm starts with an empty set of deployed IRSs $\mathcal{S}=\emptyset$, i.e., coverage only provided by the BS; and the set of feasible IRS locations $\mathcal{P}$. At each iteration, among the feasible IRS locations, we select the location with the maximum marginal coverage (i.e., combined with its best orientation) and continue the iterations until no feasible IRS location exists. In order to adapt to the condition of Theorem \ref{theorem.sub}, with a slight abuse of notation, we theoretically allow multiple repetitions of the same location with different orientations in the selected set $\mathcal{S}$ as it is allowed in Theorem \ref{theorem.sub}. However, in reality (also observed in simulations), considering the limited valid FoV, this usually does not happen and selected locations are unique. The proposed
algorithm is summarized in Algorithm \ref{GreedyAlgo1}. By Theorem \ref{theorem.sub}, $g(\cdot)$ is a normalized, monotone, submodular function. Therefore, Algorithm \ref{GreedyAlgo1} maximizes the objective function with a $1-1/\rm{e}$ optimality guarantee \cite{9186137,8537943}.

%\vspace{0.02in}
\renewcommand{\algorithmicrequire}{\textbf{Input:}}
\renewcommand{\algorithmicensure}{\textbf{Output:}}
\begin{algorithm}[t]\caption{\sc Uniform Cost IRS Placement Algorithm}
	\label{GreedyAlgo1}
	\begin{algorithmic} [1]
		\REQUIRE $g(\cdot)$, $K$, $\mathcal{P}$, $\mathcal{N}$.
		\ENSURE $\mathcal{S}$.
		
		{\bf Initialization:} $\mathcal{S} = \emptyset$;	
				
		\FOR {$k = 1$ to $K$} 		
		\STATE $\text{Determine the optimal orientation for each } p\in\mathcal{P}$;
		\STATE $p^* = \underset{p\in \mathcal{P}}{\arg\max}~ g(\mathcal{S}\cup\{p\})$, $\mathcal{S} \gets \mathcal{S}\cup \{p^*\}$;
		\ENDFOR		
	\end{algorithmic}
\end{algorithm}

\subsection{Nonuniform Placement Cost for IRSs}

%In this part, we consider nonuniform placement cost for IRSs, i.e., with cost $c_p$, and assume the maximum deployment budget of $B$.
%The proposed algorithm is similar to Algorithm \ref{GreedyAlgo1} with the difference that

In this part, we consider nonuniform placement cost for IRSs, so that $c_m$s are some given real numbers. We assume $B$ to be the maximum deployment budget and hence we have $\sum_{s\in\mathcal{S}}c_s\leqslant B$. 
The proposed algorithm is similar to Algorithm \ref{GreedyAlgo1} with the slight difference that a set of feasible unselected IRSs locations $\mathcal{B}$, i.e., the locations that do not exceed the budget constraint $B$, is considered, and the algorithm continues the iterations until no feasible IRS location exists. The proposed algorithm is summarized in Algorithm \ref{GreedyAlgo2}.

%In this part, we consider nonuniform placement cost for IRSs, i.e., with cost $c_p$, and assume the maximum deployment budget of $B$. Leveraging the submodularity of the objective function, we propose a greedy placement algorithm for IRSs. The algorithm starts with an empty set of deployed IRSs $\mathcal{S}=\emptyset$, i.e., coverage only provided by the BS; and a set of feasible unselected IRSs locations $\mathcal{B}$, i.e., the locations that do not exceed the budget constraint $B$. At each iteration, among the feasible selected IRS locations, we select the location with the maximum marginal gain and continue the iterations until no feasible IRS location exists. The proposed algorithm is summarized in Algorithm \ref{GreedyAlgo2}.

\begin{algorithm}[t]\caption{\sc Nonuniform Cost IRS Placement Algorithm}
	\label{GreedyAlgo2}
	\begin{algorithmic} [1]
		\REQUIRE $g(\cdot)$, $K$, $\mathcal{P}$, $\mathcal{N}$.
		\ENSURE $\mathcal{S}$.
		
		{\bf Initialization:} $\mathcal{S} = \emptyset$, $\mathcal{B} = \{p|p\in\mathcal{P}, c_p\leqslant B \}$
				
		\WHILE{$|\mathcal{B}|>0$}
		\STATE \text{Determine the optimal orientation for each } $p\in\mathcal{P}$
		\STATE $p^* = \underset{p\in \mathcal{P}}{\arg\max}~ g(\mathcal{S}\cup\{p\})$ 
		\STATE $\mathcal{S} \gets \mathcal{S}\cup \{p^*\}$, $B = B - c_{p^*}$, $\mathcal{B} = \{p|p\in\mathcal{P}, c_p\leqslant B \}$;
		\ENDWHILE		
	\end{algorithmic}
\end{algorithm}

It is shown that Algorithm \ref{GreedyAlgo2} can perform arbitrarily bad, as it ignores differences in placement cost \cite{9186137}. In fact, this algorithm might make a mistake by adding an expensive IRS location with a marginal coverage improvement with respect to the other IRS locations. In such examples, it is possible to select more IRS locations with the same total cost and a greater total coverage. To overcome this weakness, we propose a coverage-cost greedy algorithm that jointly considers placement cost and coverage of IRS locations. The gain-cost greedy algorithm is similar to Algorithm \ref{GreedyAlgo2} except that for adding a new IRS location, we select the IRS location with the maximum coverage-cost ratio as follows
\begin{equation}
    p^* = \underset{p\in\mathcal{B}}{\arg\max}\frac{g(\mathcal{S}\cup \{p\})-g(\mathcal{S})}{c_p}.
\end{equation}

Unfortunately, even the coverage-cost algorithm can perform arbitrarily bad and there is no performance guarantee for this algorithm. However, running both the nonuniform cost and coverage-cost algorithms and selecting the best of them, provides a $\frac{1-1/\rm{e}}{2}$ optimality guarantee \cite{9186137}. It should be noted that there is a more computationally complex algorithm that achieves a better optimality guarantee which enumerates all feasible IRS placement subsets with the cardinality of $3$ and separately augments each of them with a coverage-cost algorithm \cite{9186137,9387156}. For the sake of conciseness, we omit the details of this algorithm here.

\subsection{Optimal Orientation Selection}
In the proposed placement algorithms, investigating the candidate locations for the IRSs, we need to select the best orientation for the location under investigation. One way to do so is gridding the azimuth angle domain and run an exhaustive search to determine the optimal orientation, i.e., the orientation that provides the maximum coverage. However, here we propose a low complexity, yet optimal algorithm to determine the surface orientation, which works in the continuous domain (i.e., gridless). The algorithm is as follows. For the given IRS location, we determine the valid orientation range with respect to the BS denoted as $[\theta^B_l,\theta^B_r]$, and for every position in the area of interest that is in LOS of this location under investigation and is a candidate to be exclusively covered by this IRS, i.e., not covered either by the BS or the selected IRSs so far, we denote the valid orientation by $[\theta_l^d,\theta_r^d],~ d\in\{1,\ldots,D\}$, with $D$ being the total number of candidate UE locations for this specific IRS location. It should be noted that any orientation angle $\theta\in[\theta_l,\theta_r]$ satisfies that specific constraint. Then, a list $\ba\in\mathbb{R}^{(2D+2) \times 1}$ is formed stacking all these angles in an ascending order with $2D+2$ entries. We define a counter list of the same length, i.e., counting the number of overlapping orientation ranges at each angle, represented as $\bt\in\mathbb{R}^{(2D+2)\times 1}$ initialized by zeros and setting $t_1 = 1$. Going through the entries of the list one by one, at entry $i\in\{2,\cdots,2D+2\}$, $t_i$ is $t_{i-1}$ plus one if the $i$th entry of $a_i$ is the left extreme of a valid orientation range and minus one otherwise. Finally, among the entries of the list that are in the range of $[\theta^B_l,\theta^B_r]$ (i.e., ensuring to be a valid orientation toward the BS), orientations in the range starting with the angle with corresponding maximum counter in $\bt$, denoting the index with $i_{\text{max}}$, until the next entry of the list, i.e., $i_{\text{max}}+1$, provide the best orientations of the surface with $t_{i_{\text{max}}}-1$ presenting the number of additional covered points with this IRS (subtraction of one is due to the fact that BS has been counted in this counter).
The proposed algorithm is summarized in Algorithm \ref{GreedyAlgo3}.

%\vspace{0.02in}
\renewcommand{\algorithmicrequire}{\textbf{Input:}}
\renewcommand{\algorithmicensure}{\textbf{Output:}}
\begin{algorithm}[t]\caption{\sc Orientation Selection Algorithm}
	\label{GreedyAlgo3}
	\begin{algorithmic} [1]
		\REQUIRE $\mathbf{\theta}_l =[\theta_l^B, \theta_l^1,\ldots,\theta_l^D]$,
         $\mathbf{\theta}_r =[\theta_r^B, \theta_r^1,\ldots,\theta_r^D]$
		\ENSURE $\text{Coverage Count}, \theta_{\text{max}}$.
		
		{\bf Initialization:} $\ba = \text{sort}(\mathbf{\theta}_l\cup \mathbf{\theta}_r)$,
		\bt = [1,0,\ldots,0]		
		\FOR {$i = 2$ to $2D+2$} \IF{$a_i \in \mathbf{\theta}_l$}
        \STATE $t_i = t_i + 1$
        \ELSE
        \STATE $t_i = t_i - 1$
        \ENDIF
		\ENDFOR
        \STATE $i_{\text{max}} = \underset{i,~ \theta_l^B\leqslant a_i \leqslant \theta_r^B}{\arg\max} t_i$;
        \STATE $\text{Coverage Count} = t_{i_{\text{max}}}-1,~ \theta_{\text{max}} = a_{i_{\text{max}}}$;
	\end{algorithmic}
\end{algorithm}

\begin{theorem}\label{theorem.orientation}%[Optimality of Orientation Selection Algorithm]
One of the $\theta_l$s corresponding to the BS or UE positions is among the optimal orientations of the surface at the given location.
\end{theorem}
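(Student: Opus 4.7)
The plan is to reduce the statement to a structural fact about piecewise-constant step functions on a real interval. Define the coverage function
\[
c(\theta) \;=\; \sum_{d=1}^{D} \mathbf{1}\bigl[\theta \in [\theta_l^d,\theta_r^d]\bigr],
\]
so that an optimal orientation is any $\theta^\star \in [\theta_l^B,\theta_r^B]$ maximizing $c$. Each indicator contributes a unit upward jump at $\theta_l^d$ and a unit downward jump immediately past $\theta_r^d$; consequently $c$ is piecewise constant on $[\theta_l^B,\theta_r^B]$, and the only points at which it can strictly increase from left to right are the UE left endpoints $\theta_l^d$.

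Next, I would let $M^\star = \max_{\theta\in[\theta_l^B,\theta_r^B]} c(\theta)$ and take $\theta^\star$ to be the leftmost point of the set of maximizers. Since $c$ has only finitely many breakpoints, this maximizer set is a finite union of intervals, so the leftmost point is well-defined and attains the maximum. If $\theta^\star = \theta_l^B$, the theorem holds immediately, since $\theta_l^B$ is one of the $\theta_l$s named in the statement. Otherwise $\theta^\star > \theta_l^B$, and by minimality there exists $\varepsilon>0$ with $\theta^\star - \varepsilon \in [\theta_l^B,\theta_r^B]$ and $c(\theta^\star-\varepsilon) < c(\theta^\star)$. Such a strict upward jump of $c$ can only occur at a left endpoint of a UE interval, forcing $\theta^\star = \theta_l^d$ for some $d \in \{1,\ldots,D\}$.

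The only delicate point will be fixing the open/closed convention for the intervals $[\theta_l^d,\theta_r^d]$ consistently with Algorithm~\ref{GreedyAlgo3}, so that the upward jump at $\theta_l^d$ is attributed to the point $\theta_l^d$ itself rather than to its immediate right; taking the intervals to be closed (matching the non-strict inequalities used in the algorithm) makes this automatic. Beyond establishing the claim, the same argument also explains the structure of Algorithm~\ref{GreedyAlgo3}: it suffices to search for an optimum among the $\theta_l$-type entries of $\ba$ lying in $[\theta_l^B,\theta_r^B]$, yielding a gridless, finite-candidate procedure for the otherwise continuous orientation choice.
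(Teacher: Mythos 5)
Your proof is correct and rests on the same elementary observation as the paper's: the coverage count, viewed as a function of the orientation angle, can only increase at the left endpoints $\theta_l^d$ (or at $\theta_l^B$), so an optimum is attained at one of them. The paper packages this as a contradiction argument that shifts an arbitrary optimal $\theta^*$ left to the largest left endpoint among the intervals it satisfies, whereas you take the leftmost maximizer of the step function and identify it as an upward jump point; these are two phrasings of the same idea, and your version has the minor advantage of making the sweep-line structure of Algorithm~\ref{GreedyAlgo3} explicit.
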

\begin{proof}
The proof is derived in Appendix \ref{ap.theorem2}.
\end{proof}

Since the proposed algorithm considers all the $\theta_l$s and leveraging Theorem \ref{theorem.orientation}, finding one of the optimal orientations is proved.

\subsection{Computational Complexity Discussion}
We calculate the computational complexity assuming $K$ selected IRSs (in case of nonuniform placement cost, the final selected number of IRSs is not clear in advance, however, one can get a rough estimate based on the budget and average cost of placement).
%Since, the complexity of the algorithm is dependent on the number of selected IRSs in the end and for the nonuniform placement cost it is not clear in advance. We calculate the complexity assuming $K$ selected IRSs (in the uniform cost scenario it is $K$ nevertheless). 
The algorithm has $K$ iterations. At each iteration, for each IRS candidate location (i.e., $O(M)$), we determine the valid range of orientations of the surface for the area of interest (i.e., $O(N)$), search the list of angles (i.e., $O(N\log N)$), select the optimal orientation (i.e., $O(N)$), and finally select the optimum location (i.e., $O(M)$). Therefore, the total computational complexity of the algorithm is $O(KM(N+N\log N+M))$, considering the fact that typically $M<< N$, the total computational complexity reduces to $O(KMN\log N)$.
\section{Simulation Results}

In this section, we investigate the application of the submodular algorithm proposed in this paper for the placement and orientation of IRSs in a realistic environment using simulation-generated LOS coverage maps.
For this purpose, we use the channel model QuaDRiGa \cite{Quad} developed at Fraunhofer HHI, which has recently been extended by the feature of LOS-detection.
This allows site-specific simulations by incorporating data from the environment.
For the simulations in this paper, the data includes 3D-information of buildings and height profiles of the terrain under consideration, which is the campus network ``Berlin 5G testbed'', shown in Fig. \ref{fig.layout}.
The location of the BS, $\bP_B$, is on top of the HHI building.

\begin{figure}[ht]
     \centering
     \includegraphics[width=0.48\textwidth]{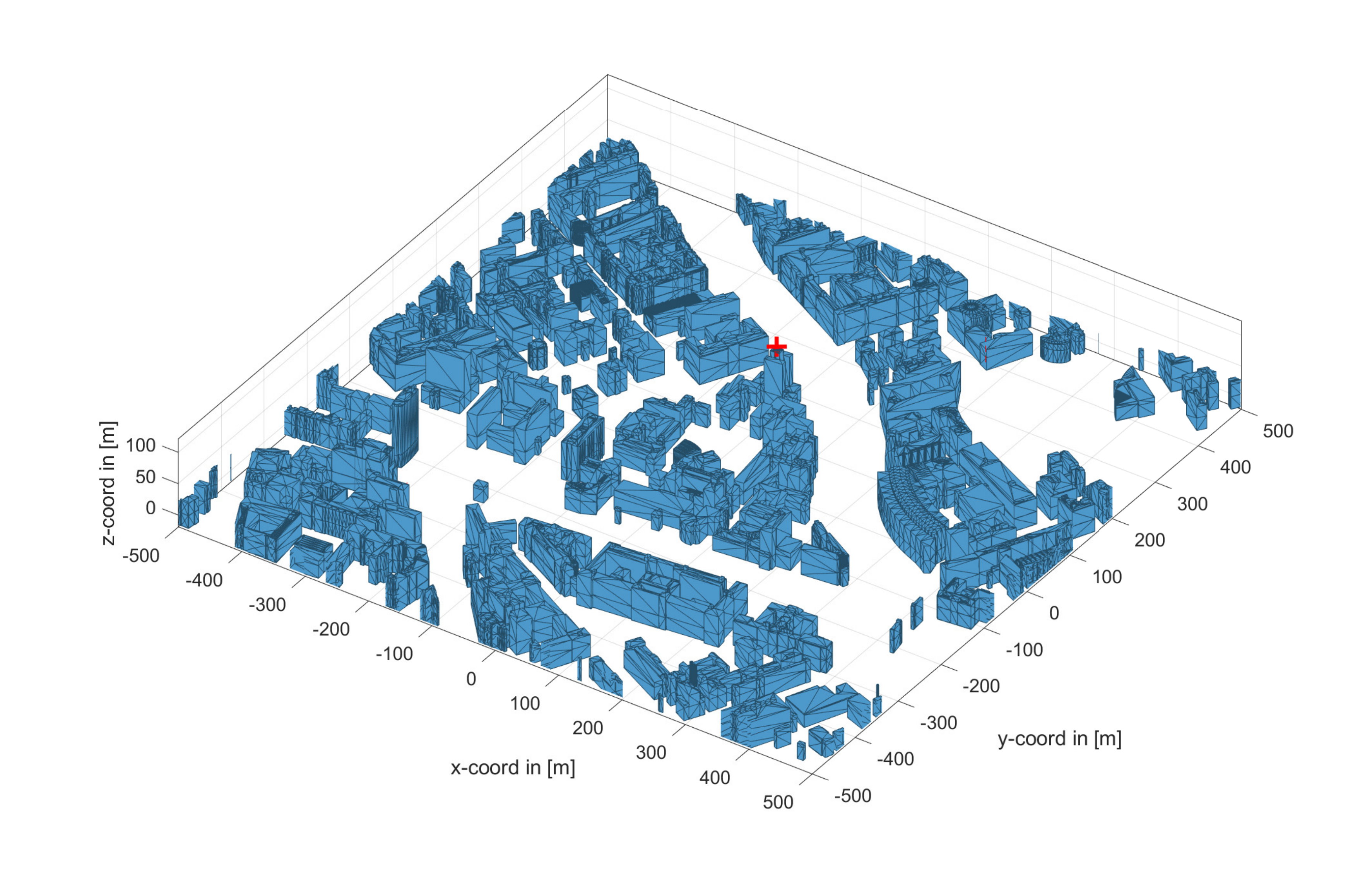}
     \caption{Three-dimensional simulation layout of the ``5G Berlin testbed'' with a BS placed on top of the HHI building.}
     \label{fig.layout}
\end{figure}

The channel model operates on a position-based sample grid, where it is determined for each point $\bL\in\mathcal{L}$ whether an LOS path exists.
From this, LOS coverage maps can be generated.
The simulation assumptions are summarized in Table \ref{tab.simulation}.

{\footnotesize
\begin{table}[t]
%\vspace{0.2in}
  \caption{Simulation settings}
  \label{tab.simulation}
  \centering
  \begin{tabular}{cc}
    \toprule
    Simulation tool  & QuaDRiGa \cite{Quad} \\
    LOS detection & 3D geometric\\% $+$ diffraction\\
    Simulation area & 5G Berlin testbed \\
    Map size & $\SI{1000}{\metre} \times \SI{1000}{\metre}$\\
    \midrule
    BS height & $\SI{63}{\metre}$ \\
    UE height & $\SI{1.5}{\metre}$ \\
    IRS height & $\SI{30}{\metre}$ \\
    \midrule
    Grid size (display) & $\SI{2}{\metre} \times \SI{2}{\metre}$\\
    Grid size (optimization)& $\SI{10}{\metre} \times \SI{10}{\metre}$\\
    FoV $[-\theta_{\text{max}},+\theta_{\text{max}}]$ & $[\ang{-60}, \ang{60}]$\\
    \bottomrule
  \end{tabular}
\end{table}
}
  
In Fig. \ref{fig.BSUECoverage}, the LOS coverage $f(\emptyset)$ for the direct link from the BS to the UEs is shown, i.e., only UEs at orange positions have an LOS path to the BS.
The aim is therefore to expand the coverage area by smart positioning and orientation of IRSs and to obtain increased coverage through virtual LOS paths.
This requires that the path from BS to IRS also has LOS, which is shown in Fig. \ref{fig.BSIRSCoverage} for an IRS height of \SI{30}{\metre}.
Accordingly, all the colored positions are potential IRS positions, $\bP\in\mathcal{P}$.

\begin{figure}[ht!]
     \centering
     \subfigure[]{\includegraphics[width=0.23\textwidth]{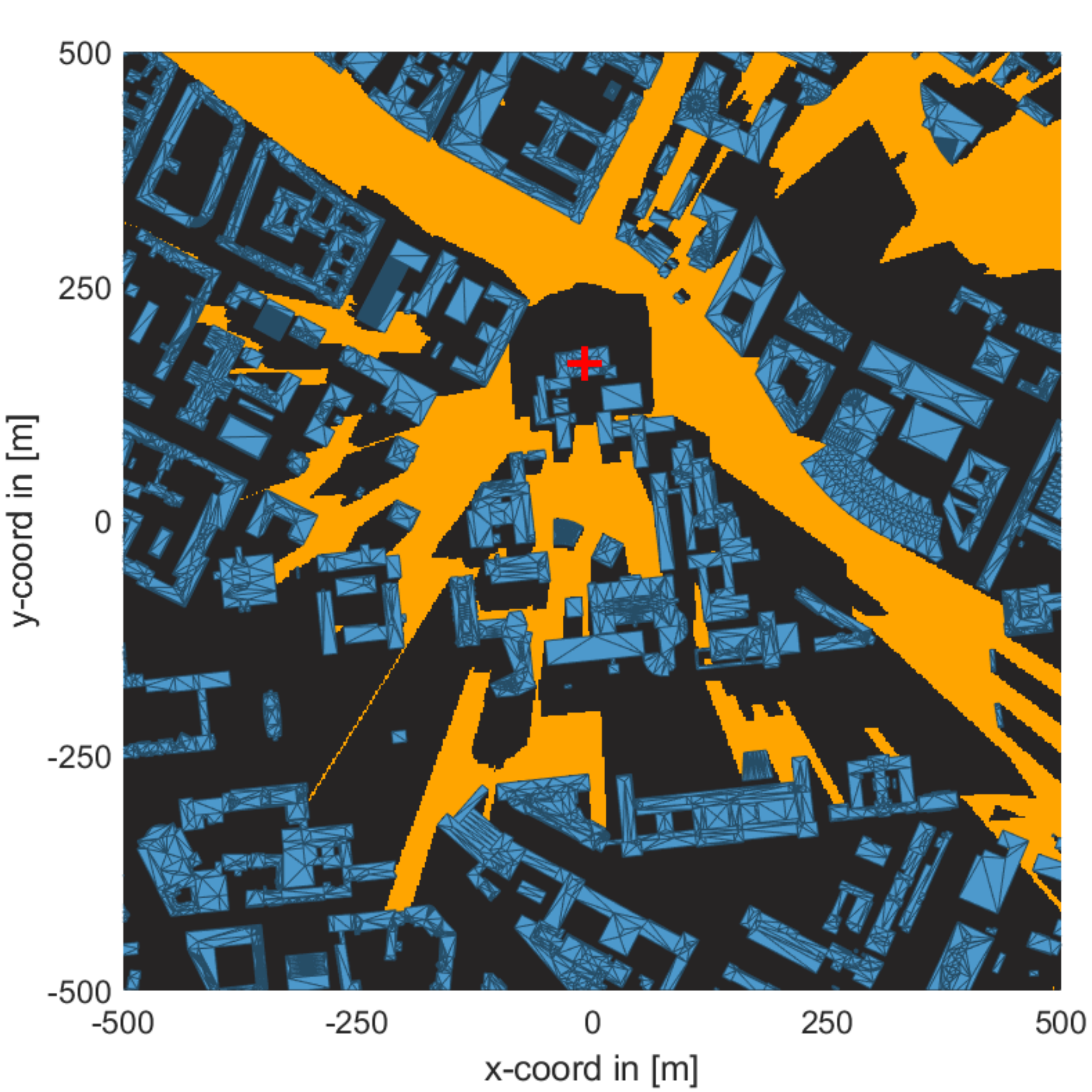}
     \label{fig.BSUECoverage}}
     \subfigure[]{\includegraphics[width=0.23\textwidth]{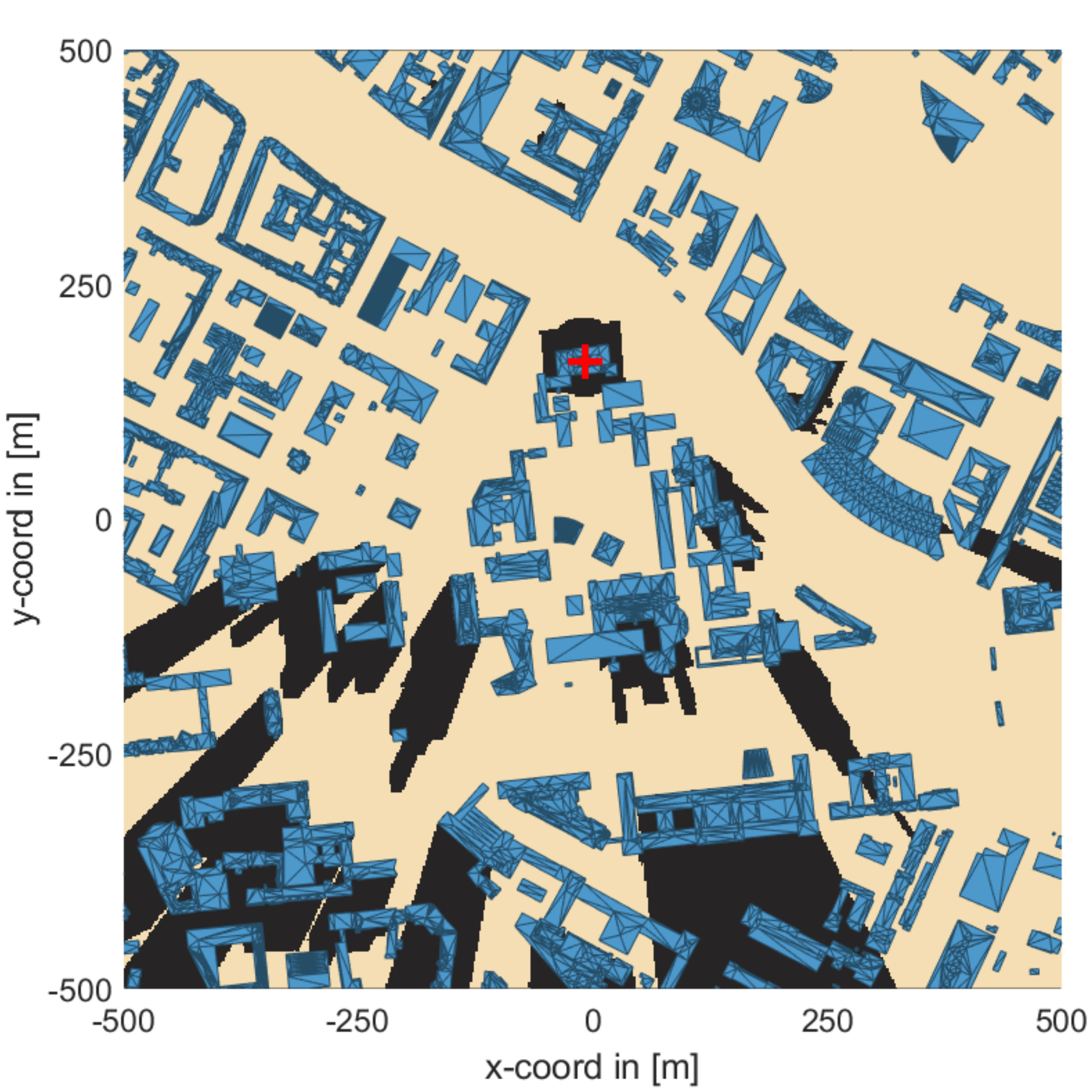}
     \label{fig.BSIRSCoverage}}
    \caption{LOS coverage provided by the BS where (a) shows direct LOS paths between BS and UE and (b) shows LOS paths between BS and potential IRS positions at \SI{30}{\metre} height.}
\end{figure}

First, we consider a small-scale scenario in which a set $\mathcal{P} = \{\bP_1,\ldots,\bP_5\}$ of $5$ IRS candidate positions (each at a height of \SI{30}{\metre}) is available.
Their locations and respective coverage are shown in Fig. \ref{fig.IRSCoverage}, where the grid size is $\SI{2}{\metre} \times \SI{2}{\metre}$.
White bars represent the IRS (not the right scale with respect to the buildings) and the respective total colored areas indicate the LOS coverage by the corresponding IRS without taking into account the IRS orientation.
Darker colored areas show the LOS coverage with the IRS orientation that maximizes LOS coverage for an FoV of $[\ang{-60}, \ang{60}]$, i.e., $f(\bP_1), \ldots, f(\bP_5)$, and dashed white lines correspond to the respective broadside of the deployed IRS.

\begin{figure*}[ht!]
     \centering
     \subfigure[]{\includegraphics[width=0.18\textwidth]{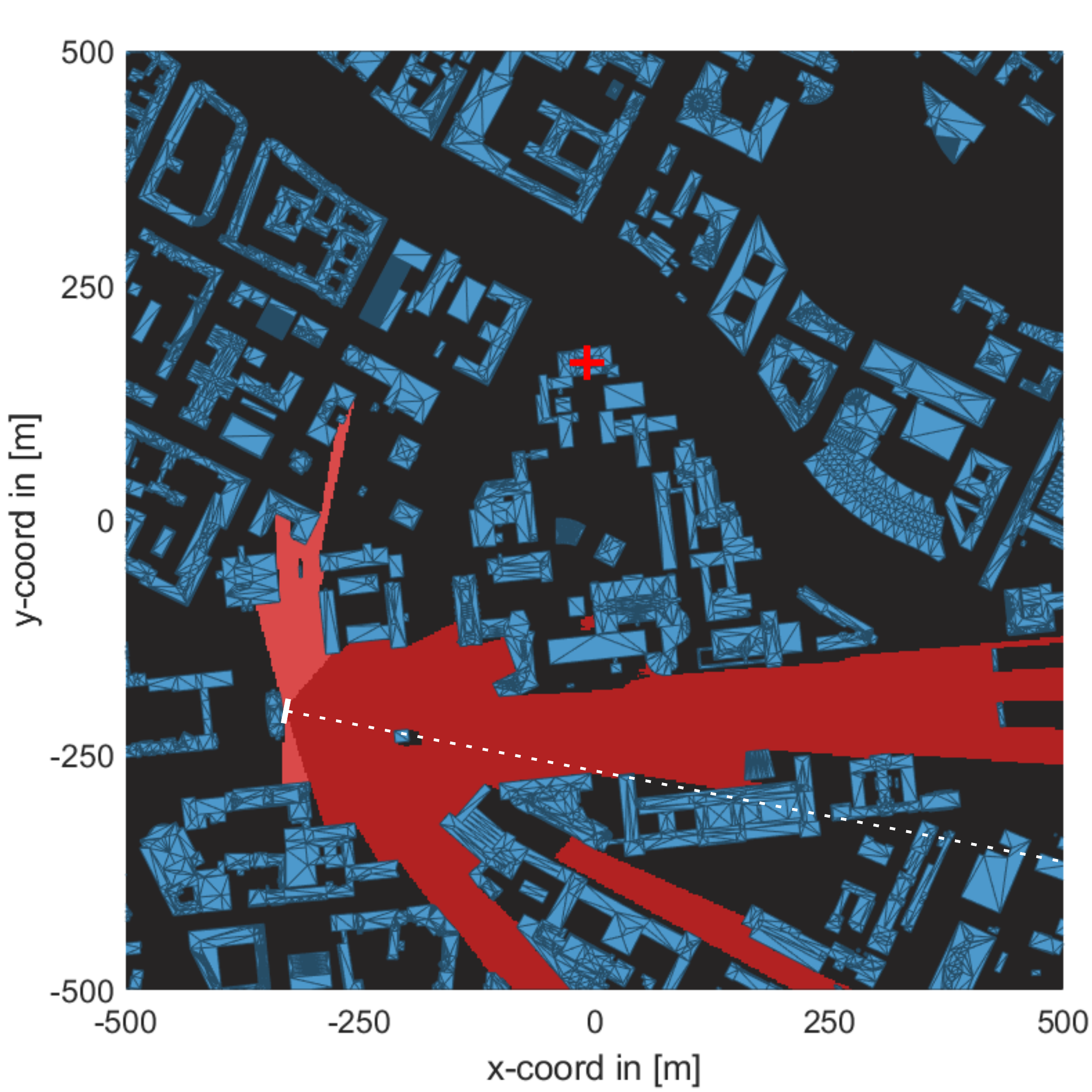}
     \label{fig.IRSCoverage1}}
     \subfigure[]{\includegraphics[width=0.18\textwidth]{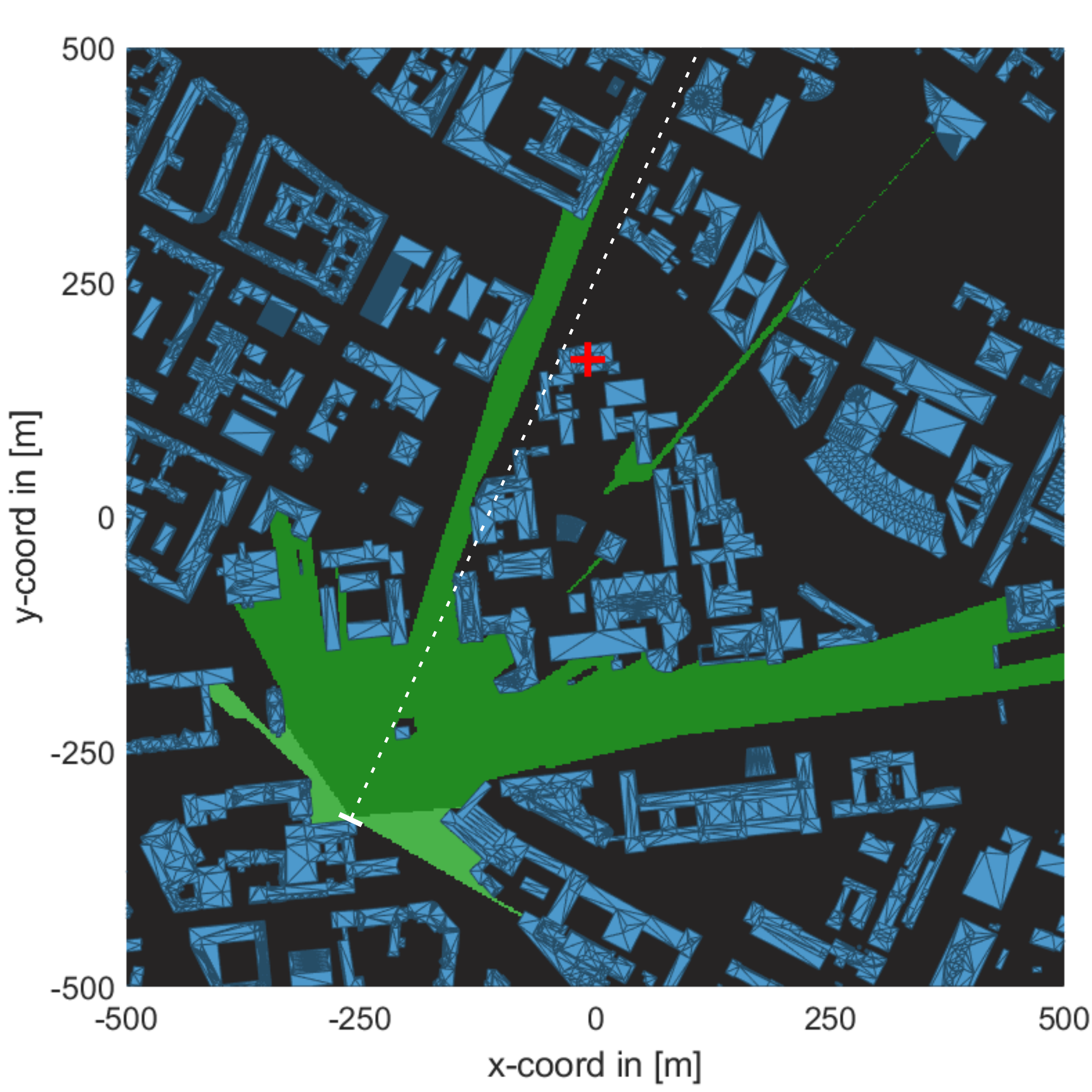}
     \label{fig.IRSCoverage2}}
     \subfigure[]{\includegraphics[width=0.18\textwidth]{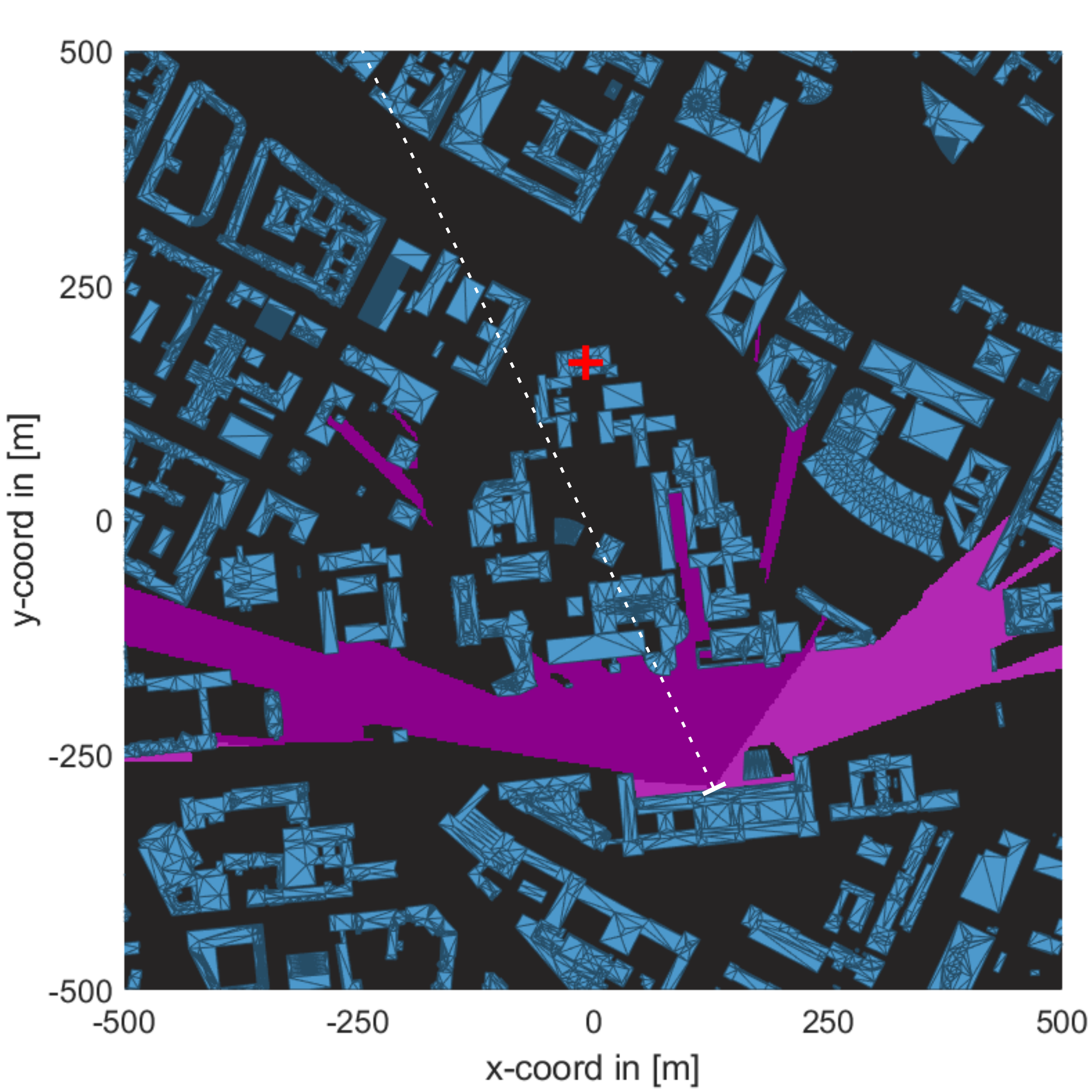}
     \label{fig.IRSCoverage3}}
     \subfigure[]{\includegraphics[width=0.18\textwidth]{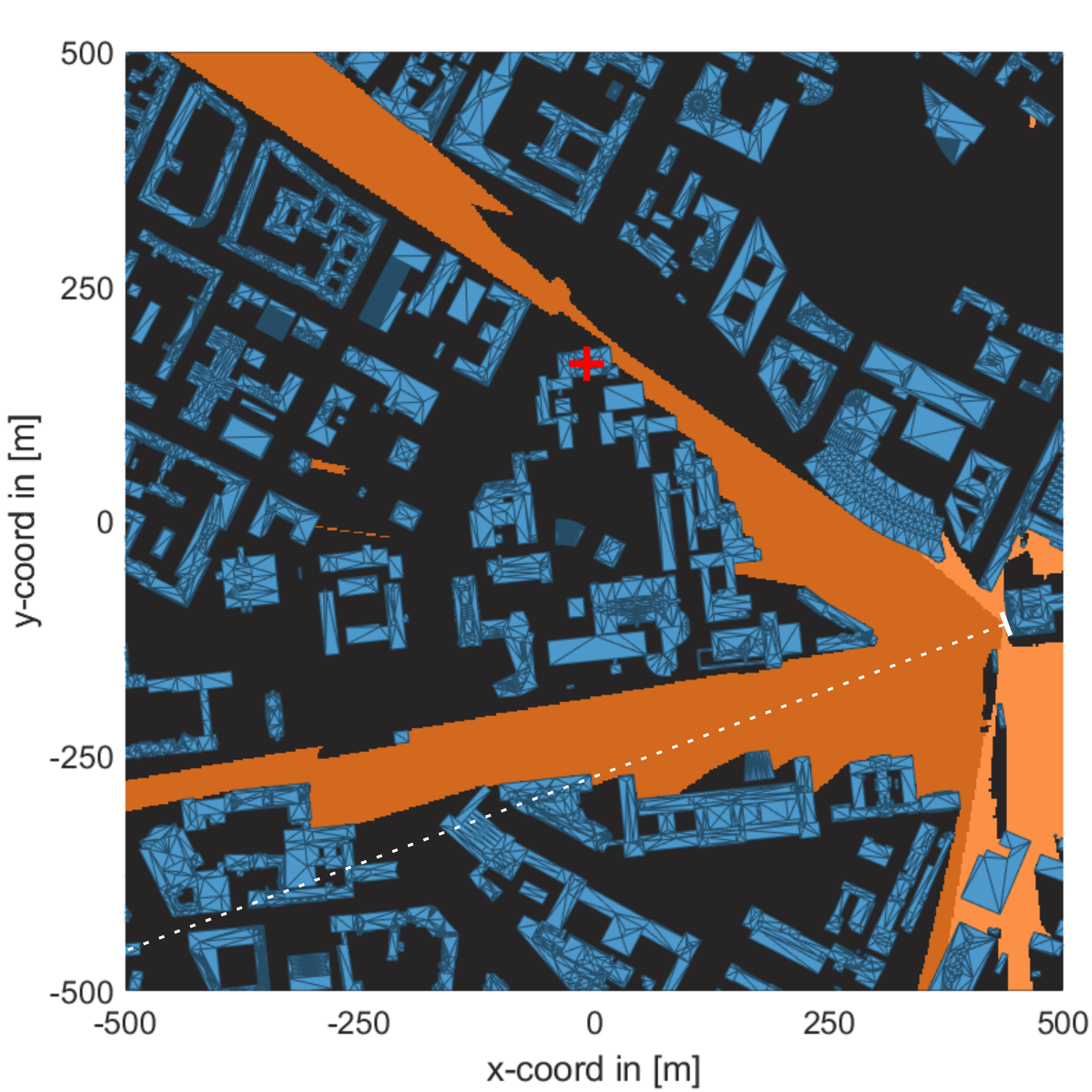}
     \label{fig.IRSCoverage4}}
     \subfigure[]{\includegraphics[width=0.18\textwidth]{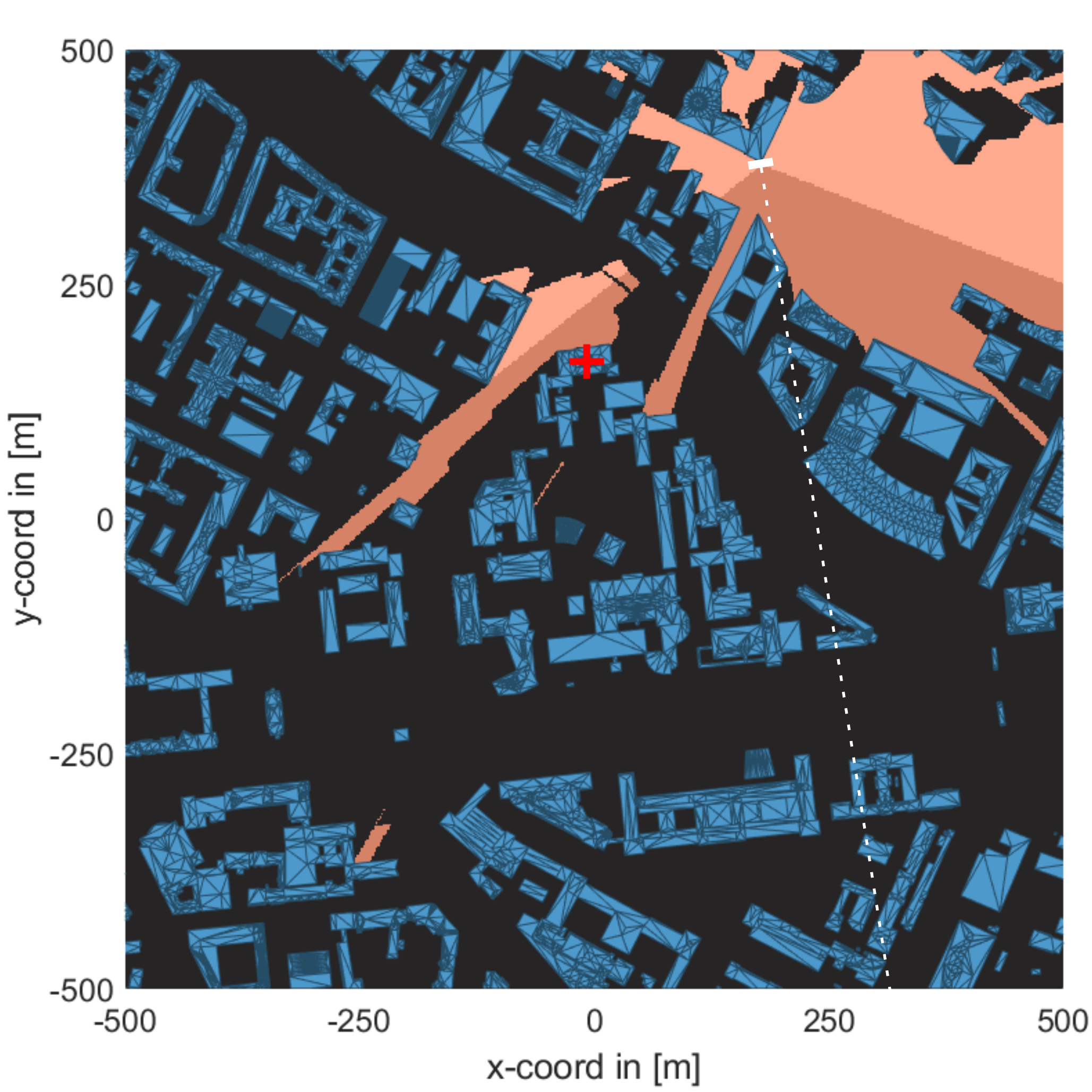}
     \label{fig.IRSCoverage5}}
    \caption{LOS coverage provided by $|\mathcal{P}| = 5$ candidate IRS positions.}
    \label{fig.IRSCoverage}
\end{figure*}

The proposed submodular IRS placement approach can now be used to determine multiple IRS positions and orientations that maximizes the LOS coverage provided by the BS from Fig.~\ref{fig.BSUECoverage}.
Note that this may result in orientations deviating from Fig. \ref{fig.IRSCoverage}.
In this example, for the sake of simplicity, no differentiation is made between the importance/priority of the points and the cost of deployment (uniform cost placement), i.e., $q_n = 1$ and $c_m = 1$.
The deployment for the optimal IRS, i.e., the one that extends the coverage the most, is shown in Fig. \ref{fig.IRSCoverage_final1}.
In Fig. \ref{fig.IRSCoverage_final2}, the second IRS is deployed which optimally (in a greedy sense) extends the superposition of the LOS coverage areas of the BS and the first IRS.

\begin{figure}[ht!]
     \centering
     \subfigure[]{\includegraphics[width=0.23\textwidth]{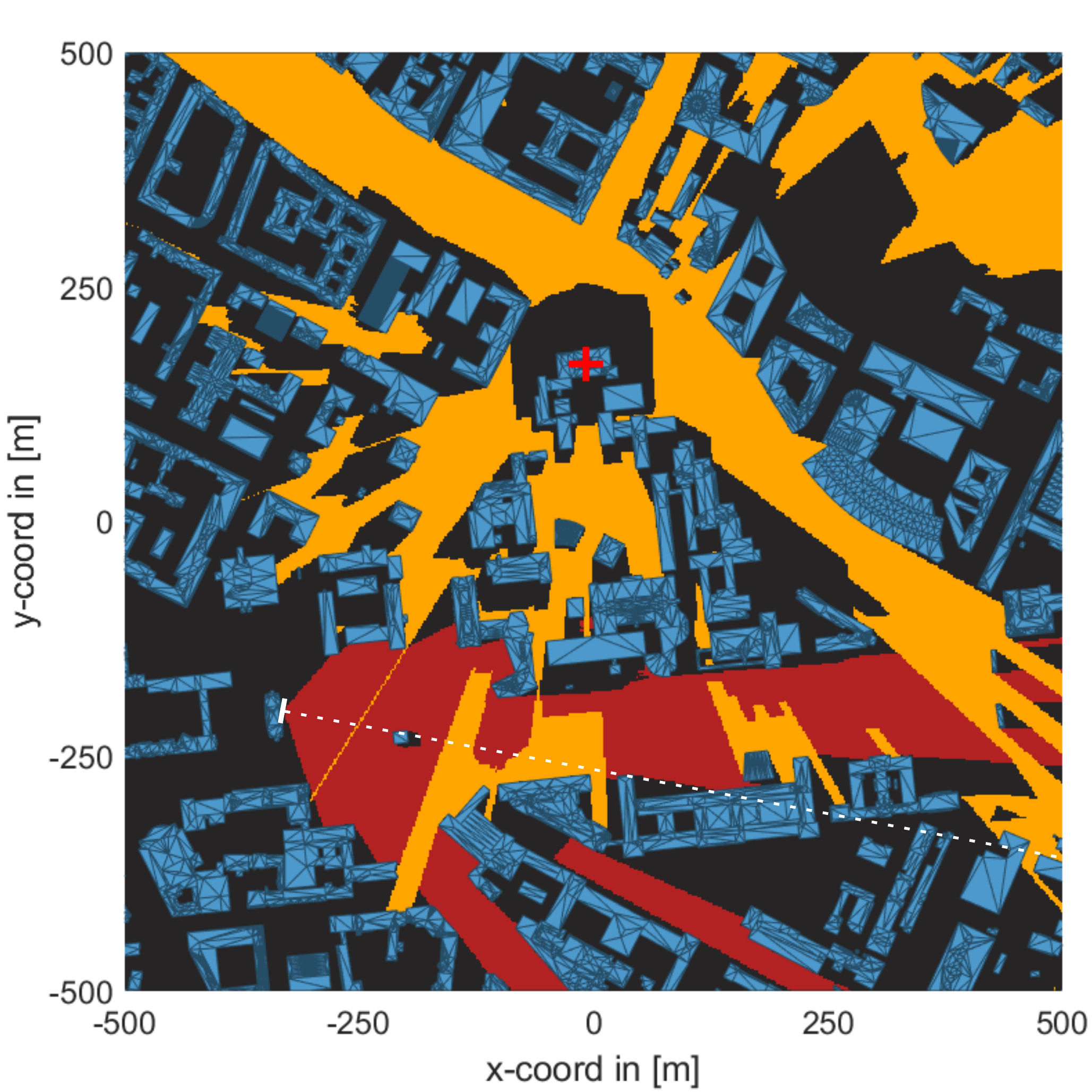}
     \label{fig.IRSCoverage_final1}}
     \subfigure[]{\includegraphics[width=0.23\textwidth]{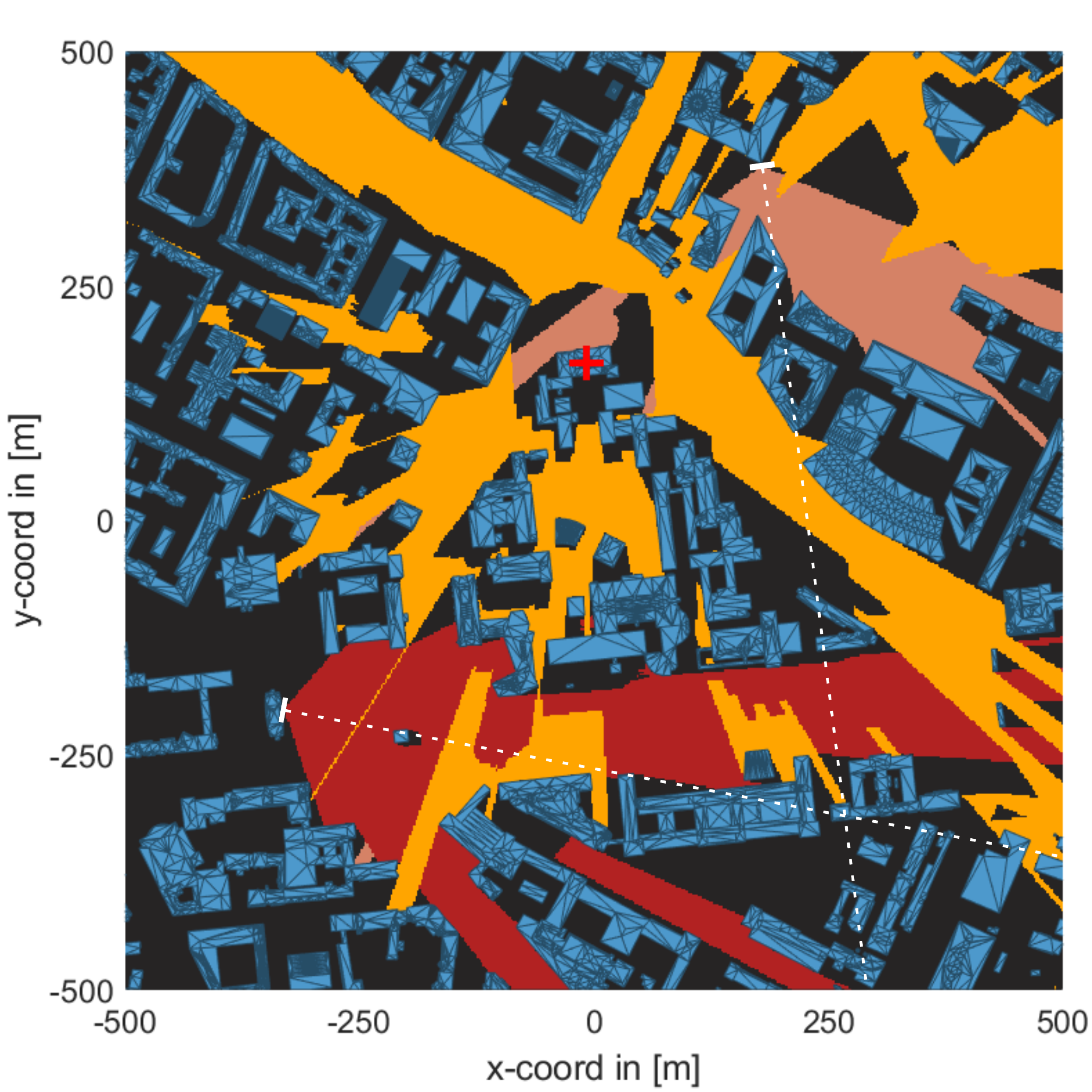}
     \label{fig.IRSCoverage_final2}}
        \caption{LOS coverage provided by BS and IRS for (a) $K=1$ and (b) $K=2$.}
        \label{fig.IRSCoverage_final}
\end{figure}

The presented submodular algorithm is near-optimal as shown in \ref{sec.IRSPlacement}.
However, an optimal determination of positions and orientations for multiple IRSs can be achieved with an exhaustive search.
The comparison of the two methods is provided in Fig. \ref{fig.IRSCoverage_exhaustive}, where LOS coverage, i.e., the number of obtained LOS points $f(\mathcal{S})$, is plotted for different numbers $K$ of deployed IRSs.
Thus, the coverage for $K = 0$ corresponds to the coverage provided by the BS only, i.e. $f(\emptyset)$.
Note that, only the $5$ potential IRS positions from the small-scale example, shown in Fig. \ref{fig.IRSCoverage}, are included and in order to make the computational effort of exhaustive search feasible, we calculate the orientations with an angular resolution of $5$ degrees.
For a fair comparison, this is also applied to the submodular approach, although no additional computational complexity regarding orientation is imposed by the submodular method as it is gridless.
We also set the maximum number of placed IRSs and the grid size to $K=4$ and to $\SI{10}{\metre} \times \SI{10}{\metre}$, respectively, for complexity reasons.
The results show that the submodular algorithm can extend coverage through IRSs almost as effectively as an exhaustive search for the considered scenario (For $K=0$ and $K=1$ it is obviously the same, for higher $K$, the difference is only a few points). 

\begin{figure}[ht!]
     \centering
     \includegraphics[width=0.45\textwidth]{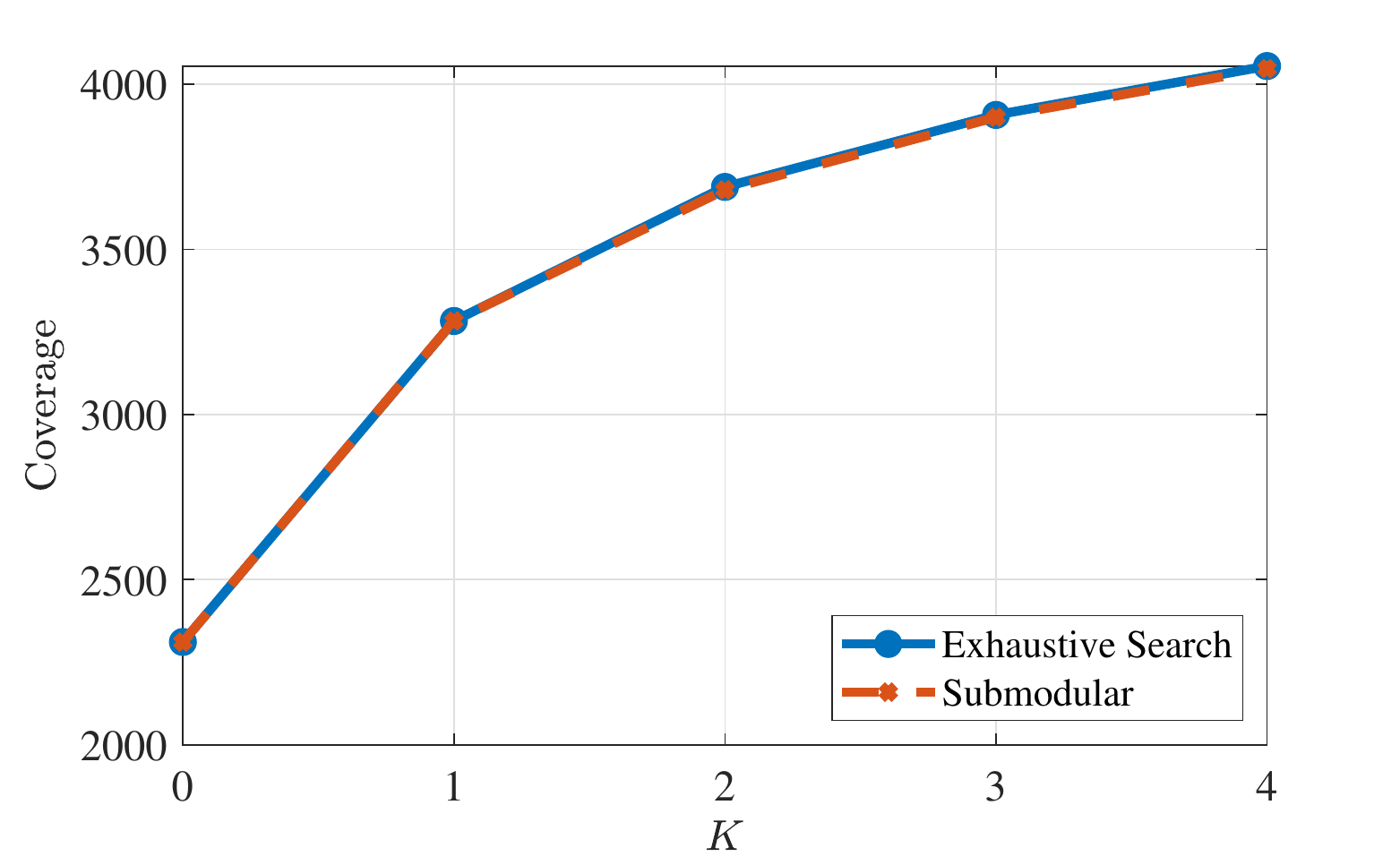}
     %\label{fig.IRSCoverage1}
        \caption{LOS coverage over number of placed IRSs $K$ with exhaustive search and proposed submodular algorithm.}
        \label{fig.IRSCoverage_exhaustive}
\end{figure}

%\begin{figure}[h!]
%     \centering
%     \includegraphics[width=0.45\textwidth]{}
%     \label{fig.IRSCoverage1}
%        \caption{LOS coverage over number of placed IRSs $K$ with exhaustive search and proposed submodular algorithm. \textcolor{red}{(NORMALIZED)}}
%        \label{fig.IRSCoverage_exhaustive}
%\end{figure}

In contrast to the exhaustive search that becomes computationally infeasible by increasing the size of the scenario, the proposed submodular algorithm can deal with large-scale IRS placement problems.
Fig. \ref{fig.IRSCoverage_submodular} shows the coverage $f(\mathcal{S})$ versus the number of IRSs $K$ for the submodular approach.
The potential IRS positions considered here correspond to all LOS coverage points at \SI{30}{\metre}, shown in Fig. \ref{fig.BSIRSCoverage}, and the grid size is $\SI{10}{\metre} \times \SI{10}{\metre}$ as before.
Further restrictions, for example, that no IRS may be placed on streets, can be included but are neglected in this example for simplicity.
In any case, it could be observed that the algorithm places IRSs on building facades, since this is where the view usually captures the most points, thus maximally expanding the coverage.
%However, the same applies to the edge of the map.
The results show a monotonically increase of coverage versus the number of deployed IRSs.
Also, the diminishing coverage is depicted as the coverage plot gradually saturates for high values of $K$.
In addition, for the present scenario, it is shown that coverage can be doubled with $K=5$ IRSs, i.e., $4637$ coverage points vs. $2311$ points with BS only.

%\vspace{0.1in}
\begin{figure}[ht]
     \centering
     \includegraphics[width=0.45\textwidth]{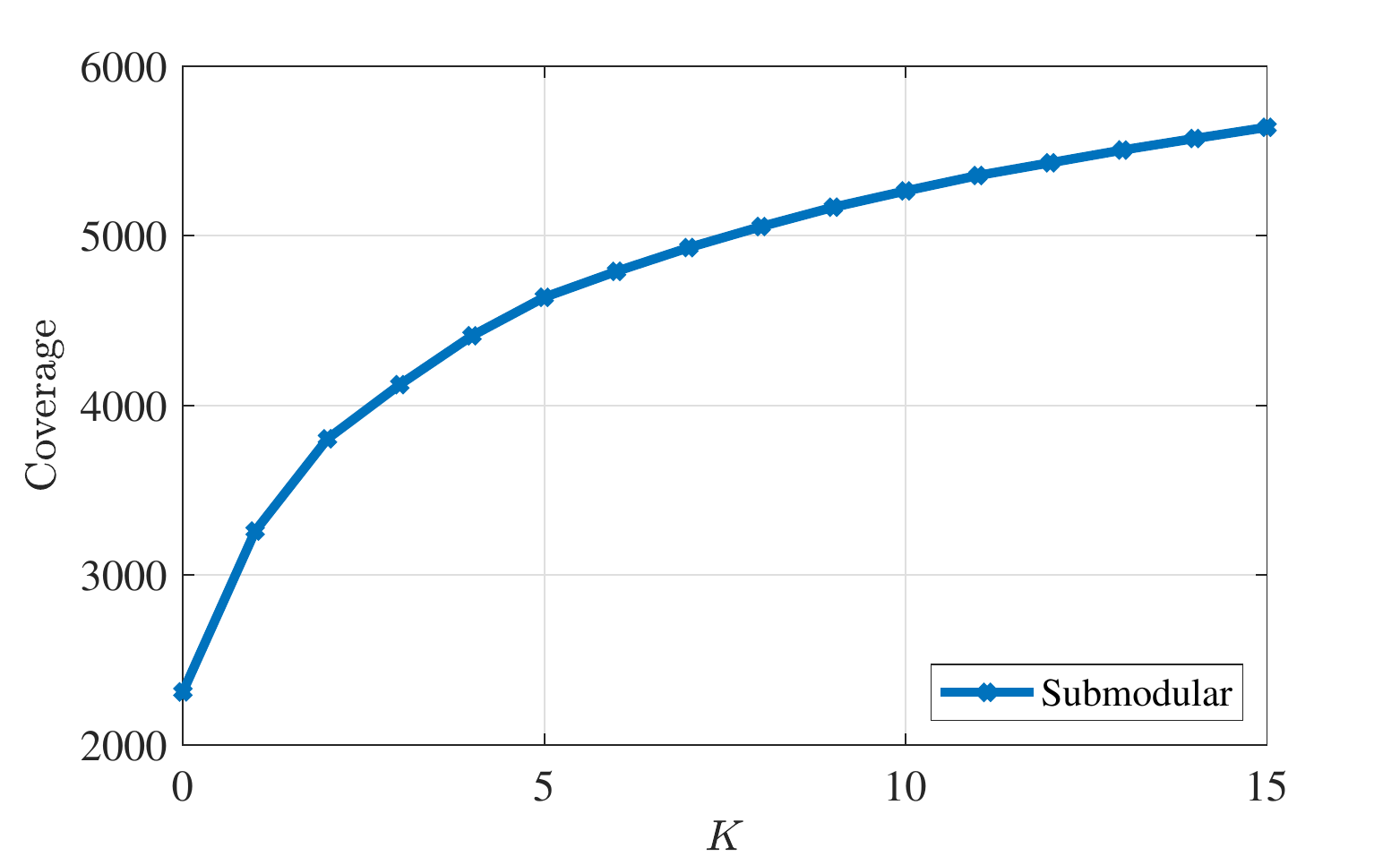}
     %\label{fig.IRSCoverage_submodular1}
        \caption{LOS coverage over number of placed IRSs $K$ with proposed submodular algorithm.}
        \label{fig.IRSCoverage_submodular}
\end{figure}

%\begin{figure}[h!]
%     \centering
%     \includegraphics[width=0.45\textwidth]{}
%     \label{fig.IRSCoverage1}
%        \caption{LOS coverage over number of placed IRSs $K$ with proposed submodular algorithm. \textcolor{red}{(NORMALIZED)}}
%        \label{fig.IRSCoverage_submodular}
%\end{figure}
\section{Conclusions} 
In this paper, we study the IRS placement problem. We address the abstraction level for IRS modeling in the network planning setting and identify LOS and orientation as two key parameters. Considering the NP-hardness of the IRS placement problem, we propose an efficient algorithm with low computational complexity, yet with performance bound proved by the submodularity of the optimization problem. Enabling multiple IRSs placement in addition to the possibility to include a realistic outdoor environment are other contributions of the proposed algorithm. For future work, we will consider a scenario with multiple BSs and perform joint BS and IRS placement. In addition, we will generalize the LOS coverage to required RF powers and corresponding SNRs as the objective function taking into account transmit power, IRS size, and distance.
%Summary
%\begin{itemize}
%    \item Importance of IRS placement, providing cheaper coverage
%    \item need for low complexity algorithms
%    \item provided performance bound
%\end{itemize}

%Future directions
%\begin{itemize}
%    \item Multiple BSs
%    \begin{itemize}
%        \item Joint BS and IRS placement
%        \item SNR consideration rather than LOS coverage (including beam shape) which limits the range
%        \item IRS size (different size) consideration
%    \end{itemize}
%\end{itemize}
%\appendices
\section*{Appendices}
\subsection{Proof of Theorem \ref{theorem.sub}} \label{ap.theorem1}
First, we show that the function is normalized. That is, $g(\emptyset) = 0$. This can be proved noting that $f(\mathcal{S}) - f(\emptyset) = \emptyset$ for $\mathcal{S}=\emptyset$.
Now, we show the monotonicity of $g(\mathcal{S})$ in the following steps:
\begin{equation}
    \mathcal{A} \subseteq \mathcal{B} \rightarrow f(\mathcal{A}) \subseteq f(\mathcal{B})
    \label{eq.proof1}
\end{equation}
which is trivial as adding IRSs do not reduce the coverage. Consequently, assuming non-negative values for $q_n$s, i.e., $q_n \geqslant 0, n=1,\cdots,N$, \eqref{eq.proof1} leads to $g(\mathcal{A}) \subseteq g(\mathcal{B})$.

Finally, we show the submodularity of of the set function. To show submodualarity we need to prove that
\begin{equation}
    g(\mathcal{S}\cup\{a\}) - g(\mathcal{S}) \geqslant g(\mathcal{S}\cup\{a,b\}) - g(\mathcal{S}\cup\{b\})
\end{equation}
for any subset $\mathcal{S}\subseteq \mathcal{P}$ and arbitrary IRS locations $a,b\in\mathcal{P}$. Since $g(\mathcal{S})$ is a linear summation of $q_n$s for $\bL_n\in f(\mathcal{S})\setminus f(\emptyset)$ and $f(\mathcal{A})\subseteq f(\mathcal{B})$ for $\mathcal{A}\subseteq \mathcal{B}$, it is sufficient to show that
\begin{equation}
    f(\mathcal{S}\cup\{a,b\}) \setminus f(\mathcal{S}\cup\{b\}) \subseteq f(\mathcal{S}\cup\{a\}) \setminus f(\mathcal{S})
    \label{eq.theorem1step5}
\end{equation}
which we prove by contradiction. Let us assume that \eqref{eq.theorem1step5} is not satisfied which subsequently means that there exists an $x\in f(\mathcal{S}\cup\{a,b\}) \setminus f(\mathcal{S}\cup\{b\})$ in which $x \notin f(\mathcal{S}\cup\{a\}) \setminus f(\mathcal{S})$. Clearly, $x\in f(\mathcal{S}\cup\{a,b\}) \setminus f(\mathcal{S}\cup\{b\})$ means that $x\in f(\mathcal{S}\cup\{a,b\})$ and $x \notin f(\mathcal{S}\cup\{b\})$. Subsequently, $x\in f(\{a\})$ and $x\notin f(\mathcal{S})$ which results in $x \in f(\mathcal{S}\cup\{a\}) \setminus f(\mathcal{S})$, contradiction.

\subsection{Proof of Theorem \ref{theorem.orientation}} \label{ap.theorem2}
We prove the theorem by contradiction. Let us assume there is not a $\theta_l$ that provides the maximal coverage and consider $\theta^*$ as one of the optimal orientations for a given IRS location. Clearly, for the UE positions that are covered with this IRS location and the orientation $\theta^*$, we have $\theta_l \leqslant \theta^*$, as well as $\theta_l^B\leqslant \theta^*$ based on the assumptions. In the set of union of $\theta_l$s of covered UEs and BS, we select the maximal number, which here is denoted by $\theta_l^{\text{max}}$. It is straightforward to see that $\theta_l^{\text{max}}$, still satisfies all the orientation constraints introduced by the UEs and BS. Therefore, $\theta_l^{\text{max}}$ is also an optimal orientation and a left extreme points of the given orientation ranges. Contradiction and the proof is complete.

%\section*{Acknowledgement}
%The authors acknowledge the financial support by the Federal Ministry of Education and Research of Germany in the programme of “Souverän. Digital. Vernetzt.” Joint project 6G-RIC, project identification number: 16KISK020K and 16KISK030.

%
%%%%%%%%%%%%%%%%%%%%%%%%%%%%%%%%%%%%%%%%%%%%%%%%%%%%%%%%%%%%%%%%

%%%%%%%%%%%%%%%%%%%%%%%%%%%%%%%%%%%%%%%%%%%%%%%%%%%%%%%%%%%%%%%%

\ifCLASSOPTIONcaptionsoff
\newpage
\fi
\bibliographystyle{ieeetr}
\bibliography{ref}

\begin{thebibliography}{10}

\bibitem{wu2019intelligent}
Q.~Wu and R.~Zhang, ``Intelligent reflecting surface enhanced wireless network
  via joint active and passive beamforming,'' {\em IEEE Transactions on
  Wireless Communications}, vol.~18, no.~11, pp.~5394--5409, 2019.

\bibitem{pan2021reconfigurable}
C.~Pan, H.~Ren, K.~Wang, J.~F. Kolb, M.~Elkashlan, M.~Chen, M.~Di~Renzo,
  Y.~Hao, J.~Wang, A.~L. Swindlehurst, {\em et~al.}, ``Reconfigurable
  intelligent surfaces for {{6G}} systems: Principles, applications, and
  research directions,'' {\em IEEE Communications Magazine}, vol.~59, no.~6,
  pp.~14--20, 2021.

\bibitem{wu2021intelligent}
Q.~Wu, S.~Zhang, B.~Zheng, C.~You, and R.~Zhang, ``Intelligent reflecting
  surface-aided wireless communications: A tutorial,'' {\em IEEE Transactions
  on Communications}, vol.~69, no.~5, pp.~3313--3351, 2021.

\bibitem{yang2021energy}
Z.~Yang, M.~Chen, W.~Saad, W.~Xu, M.~Shikh-Bahaei, H.~V. Poor, and S.~Cui,
  ``Energy-efficient wireless communications with distributed reconfigurable
  intelligent surfaces,'' {\em IEEE Transactions on Wireless Communications},
  vol.~21, no.~1, pp.~665--679, 2021.

\bibitem{9690635}
X.~Yu, V.~Jamali, D.~Xu, D.~W.~K. Ng, and R.~Schober, ``Smart and
  reconfigurable wireless communications: From {{IRS}} modeling to algorithm
  design,'' {\em IEEE Wireless Communications}, vol.~28, no.~6, pp.~118--125,
  2021.

\bibitem{9838993}
M.~Zhang and X.~Yuan, ``{{IRS}}-aided {{MIMO}} with cascaded {{LoS}} links:
  Channel modelling and full multiplexing region,'' in {\em ICC 2022 - IEEE
  International Conference on Communications}, pp.~2041--2046, 2022.

\bibitem{9827797}
Z.~Esmaeilbeig, K.~V. Mishra, and M.~Soltanalian, ``{{IRS}}-aided radar:
  Enhanced target parameter estimation via intelligent reflecting surfaces,''
  in {\em 2022 IEEE 12th Sensor Array and Multichannel Signal Processing
  Workshop (SAM)}, pp.~286--290, 2022.

\bibitem{ghatak2021placement}
G.~Ghatak, ``On the placement of intelligent surfaces for {{RSSI}}-based
  ranging in mm-wave networks,'' {\em IEEE Communications Letters}, vol.~25,
  no.~6, pp.~2043--2047, 2021.

\bibitem{ntontin2020reconfigurable}
K.~Ntontin, A.-A.~A. Boulogeorgos, D.~Selimis, F.~Lazarakis, A.~Alexiou, and
  S.~Chatzinotas, ``Reconfigurable intelligent surface optimal placement in
  millimeter-wave networks,'' {\em arXiv preprint arXiv:2011.09949}, 2020.

\bibitem{lu2021aerial}
H.~Lu, Y.~Zeng, S.~Jin, and R.~Zhang, ``Aerial intelligent reflecting surface:
  Joint placement and passive beamforming design with {{3D}} beam flattening,''
  {\em IEEE Transactions on Wireless Communications}, vol.~20, no.~7,
  pp.~4128--4143, 2021.

\bibitem{hashida2020intelligent}
H.~Hashida, Y.~Kawamoto, and N.~Kato, ``Intelligent reflecting surface
  placement optimization in air-ground communication networks toward {{6G}},''
  {\em IEEE Wireless Communications}, vol.~27, no.~6, pp.~146--151, 2020.

\bibitem{9712623}
P.-Q. Huang, Y.~Zhou, K.~Wang, and B.-C. Wang, ``Placement optimization for
  multi-{{IRS}}-aided wireless communications: An adaptive differential
  evolution algorithm,'' {\em IEEE Wireless Communications Letters}, vol.~11,
  no.~5, pp.~942--946, 2022.

\bibitem{issa2021using}
M.~Issa and H.~Artail, ``Using reflective intelligent surfaces for indoor
  scenarios: Channel modeling and {{RIS}} placement,'' in {\em 2021 17th
  International Conference on Wireless and Mobile Computing, Networking and
  Communications (WiMob)}, pp.~277--282, IEEE, 2021.

\bibitem{9903366}
G.~Stratidakis, S.~Droulias, and A.~Alexiou, ``Optimal position and orientation
  study of reconfigurable intelligent surfaces in a mobile user environment,''
  {\em IEEE Transactions on Antennas and Propagation}, pp.~1--1, 2022.

\bibitem{zeng2020reconfigurable}
S.~Zeng, H.~Zhang, B.~Di, Z.~Han, and L.~Song, ``Reconfigurable intelligent
  surface ({{RIS}}) assisted wireless coverage extension: {{RIS}} orientation
  and location optimization,'' {\em IEEE Communications Letters}, vol.~25,
  no.~1, pp.~269--273, 2020.

\bibitem{9186137}
E.~Tohidi, R.~Amiri, M.~Coutino, D.~Gesbert, G.~Leus, and A.~Karbasi,
  ``Submodularity in action: From machine learning to signal processing
  applications,'' {\em IEEE Signal Processing Magazine}, vol.~37, no.~5,
  pp.~120--133, 2020.

\bibitem{8537943}
E.~Tohidi, M.~Coutino, S.~P. Chepuri, H.~Behroozi, M.~M. Nayebi, and G.~Leus,
  ``Sparse antenna and pulse placement for colocated {{MIMO}} radar,'' {\em
  IEEE Transactions on Signal Processing}, vol.~67, no.~3, pp.~579--593, 2019.

\bibitem{9387156}
E.~Tohidi, S.~Parsaeefard, M.~A. Maddah-Ali, B.~H. Khalaj, and A.~Leon-Garcia,
  ``Near-optimal robust virtual controller placement in {{5G}} software defined
  networks,'' {\em IEEE Transactions on Network Science and Engineering},
  vol.~8, no.~2, pp.~1687--1697, 2021.

\bibitem{Quad}
{Fraunhofer HHI}, ``Quadriga.'' {https://github.com/fraunhoferhhi/QuaDRiGa},
  2021.

\end{thebibliography}

\end{document}